\newcommand\Xhline[1]{\noalign{\ifnum0=`}\fi\hrule height #1\futurelet
  \reserved@a\@xhline}
\definecolor{subsectioncolor}{rgb}{0.0, 0.3, 0}
\newtheorem{remark}{\bf Remark}
\newtheorem{assumption}{\bf Assumption}
\newtheorem{lemma}{\bf Lemma}
\newtheorem{definition}{\bf Definition}
\newtheorem{problem}{\bf Problem}
\newtheorem{example}{\bf Example}
\newtheorem{corollary}{\bf Corollary}
\newtheorem{thm}{\bf Theorem}
\newcommand{\wxy}[1]{\textcolor{black}{#1}}
\newcommand{\xyw}{\color{black}} 
\newcommand{\xywv}{\color{black}} 
\newcommand{\xywr}{\color{black}} 
\newcommand{\xyr}{\color{black}} 
\newcommand{\hw }{\color{black}}
\newcommand{\hwr }{\color{black}}
\newcommand{\hwz }{\color{black}}
\newcommand{\mg}{\color{black}} 
\newcommand{\col}{\textnormal{col}}
\definecolor{deepblue}{rgb}{0, 0.2, 0.5}
\def\BibTeX{{\rm B\kern-.05em{\sc i\kern-.025em b}\kern-.08em
    T\kern-.1667em\lower.7ex\hbox{E}\kern-.125emX}}
\begin{document}

\onecolumn {\textbf
IEEE Publishing Operations has received your accepted article "Safe Learning Control with Optimality and Stability Guarantees" (Digital Object Identifier or DOI: 10.1109/TAC.2026.3707508).}
\twocolumn
\newpage

\title{
Safe Learning Control with Optimality and Stability Guarantees
}








\author{Xinyang Wang, Hongwei Zhang, Shimin Wang, Wei Xiao, Martin Guay
\thanks{This work was supported by the National Natural Science Foundation of China under projects 62473114 and W2541002, and the Shenzhen Science and Technology Program under project SYSPG20241211173609005. (Corresponding author: Hongwei Zhang)}
\thanks{Xinyang Wang and Hongwei Zhang are with the Shenzhen Key Lab for Advanced Motion Control and Modern Automation Equipments, School of Intelligence Science and Engineering, Harbin Institute of Technology, Shenzhen, Guangdong 518055, China (e-mail: wangxy@stu.hit.edu.cn; hwzhang@hit.edu.cn).  Shimin Wang is with the School of Data Science, Lingnan University, Hong Kong (e-mail: smwang@ln.edu.hk). Wei Xiao is with School of Electrical and Electronic Engineering, Nanyang Technological University, Singapore , with M3S, SMART, and with MIT CSAIL (e-mail: weixy@mit.edu). Martin Guay is with Queen's University, Kingston, ON K7L 3N6, Canada (e-mail: guaym@queensu.ca).}
}

\maketitle

\begin{abstract}
\wxy{
Merely pursuing performance may adversely affect safety, while a conservative policy for safe exploration will degrade the performance. How to guarantee both safety and performance in learning-based control problems is an interesting yet challenging issue.
This paper aims to enhance system performance with a safety guarantee by solving reinforcement learning (RL)-based optimal control problems for nonlinear systems subject to high-relative-degree state constraints and unknown time-varying disturbance/actuator faults.
%
%
A new type of control barrier functions (CBFs), termed high-order reciprocal-based control barrier function, is proposed to handle high-relative-degree constraints, which extends the design of CBFs to enforce robust safety without knowing the disturbance bound.
%
%
The concept of gradient similarity is proposed to quantify the relationship between safety and performance.
Finally, gradient manipulation and adaptive mechanisms are introduced in the model-based safe RL framework to enhance the performance with a safety guarantee.
%
%
Two simulation examples illustrate the efficacy of the proposed algorithms.  
%
}

\end{abstract}

\begin{IEEEkeywords}
Safety guarantee, control barrier function, reinforcement learning
\end{IEEEkeywords}

\section{Introduction}
%

%
%

{\xyw In practice, safety, optimality and stability are three important design objectives for control systems, especially for safety-critical and resource-constrained systems \cite{chriat2023optimality, wang2022geometrically}}.
As an effective approach to simultaneously consider stability, optimality and safety,  safe optimal control has attracted increasing attention in the control community in the past few years, which
aims to stabilize dynamical systems while minimizing a user-defined cost function and adhering to certain safety constraints. It has been successfully applied in various scenarios, such as autonomous driving 
\cite{cao2022trustworthy} and trajectory optimization \cite{nakka2021chance}.
%

%
%
 Many approaches, such as model predictive control and safety filter, have been investigated to address safe optimal control problems \cite{wabersich2023data}.
However, these approaches require accurate system information to project the optimal controller into the safe input set, and thus cannot rigorously guarantee safety in the presence of disturbances.
As an effective tool for designing optimal controllers for uncertain systems, reinforcement learning (RL) has been studied for addressing safe optimal control problems by using some technical strategies, such as external knowledge 
\cite{du2023safelight}, reachability analysis 
\cite{selim2022safe}, and worst-case criterion \cite{he2024trustworthy}.

%
%
Despite their success in practical applications, these approaches encounter a serious challenge of computational cost, especially for high-dimensional systems.
%
%
To reduce the computational burden, the concept of control barrier function (CBF) was introduced in \cite{ames2017control} as a safety constraint to certify the forward invariance of a safe set.
Two notions of CBF are commonly utilized, i.e., reciprocal CBF and zeroing CBF. The former goes to infinity when approaching the boundary of the safe set, while the latter vanishes.
{\xyr
In \cite{xiao2021high}, a high-order zeroing CBF is proposed to address high-relative-degree constraints. 
}

Recently, CBF has been integrated into {\xyr model-based} RL techniques to address safe optimal control problems.
The approach from \cite{mahmud2021safety} utilizes the barrier transformation technique to map a constrained system to an unconstrained one, allowing standard RL techniques to be applied directly.
Unfortunately, these techniques are restricted to box safety constraints and cannot adequately address complex safety constraints, such as ellipsoidal and cone safety constraints \cite{murtaza2022safety, long2022safe}, which are frequently encountered in robotic control.
Similar to the penalty method (see Chapter 2.1.1 in \cite{xiao2023safe}), a barrier function-based penalty term is incorporated into the cost function in \cite{marvi2021safe}, taking into account both optimality and safety during the learning process.
However, as argued in \cite{almubarak2021hjb}, the effectiveness of the CBF-based penalty method strongly depends on the proper design of the cost function, and an improper cost function may lead to undesirable learning results.
%
It is also worth noting that approaches \cite{mahmud2021safety, marvi2021safe}, where safety and optimality are tightly coupled, may violate the safety constraints during the learning process. 
%

Different from the approaches \cite{mahmud2021safety, marvi2021safe}, a recent study \cite{max2023safe} introduces a {\xyr model-based safe} RL method that emphasizes learning a performance-driven policy while enforcing safety through a safeguard policy.
In other words, this method simplifies the learning target by decoupling the safety objective from the learning process using the safeguard policy. 
This allows the method to focus solely on optimizing performance, thus reducing the computational cost.
%
%
%
%
This method has been successfully applied in multi-quadrator systems to address safe human-swarm interaction problems \cite{li2024game}.
Although this safeguard-based RL approach can significantly reduce the computational burden, it is limited to specific safety scenarios, such as safety constraints with relative degree one,  leading to a more conservative solution.
{\xyr A high-order zeroing CBF-based safety filter was proposed in \cite{peng2023design} to design an RL approach that addresses high-relative-degree constraints.}
{\xyr However}, the computational cost required for the implementation of the approach proposed in \cite{peng2023design} is significantly more than the approach in \cite{max2023safe} {\xyr since it requires the solution of an optimization problem in real-time. Additionally, high-order zeroing CBF-based safety filter remains sensitive to} disturbance/fault.


%
{\mg The limitations discussed above} motivate an in-depth investigation of safeguard-based RL approaches. 
Specifically, on the basis of \cite{max2023safe}, we extend the {\xyr model-based safe} RL to address optimal control problems subject to high-relative-degree constraints.
Also, given that objectives of stability/optimality and safety may be conflicting, an important objective of this study is the design of a {\mg control system that compromises seemlessly between} safety concerns and performance requirements.
%
%
Furthermore, the robustness of {\xyr model-based} safe RL to {\xyr unbounded} disturbance is explored.
%
First, we introduce a new type of CBFs, termed high-order {\xyr reciprocal-based} CBFs (HO-RCBFs),  
which are capable of {\xyr enforcing robust safety of systems subject to high-relative-degree constraints without any knowledge of disturbances.}
%
The key difference between HO-RCBFs and HO-ZCBFs in \cite{xiao2021high} is that HO-RCBFs {\mg require} large power to push the system trajectory away from the boundary of the safe set, {\mg while} HO-ZCBFs vanish {\mg as it approaches} the boundary of the safe set.
This property {\xyr is used to reject disturbance in HO-RCBFs by reciprocal-based energy instead of using the bound of disturbance.}
%
Second, we extend the safeguard controller from \cite{max2023safe} to a high-order version to handle high-relative-degree safety constraints, and rigorously analyze its robustness to unknown disturbance/fault.
Most importantly, {\mg our approach formally achieves a weighted balance between safety and performance with theoretically guaranteed safety, which has been recognized as a significant challenge in safe RL \cite{gu2024balance}.}

\emph{\textbf{Notations:}} 
Matrix $P>0$ means $P$ is positive definite.
Both the Euclidean norm of a vector and the Frobenius norm of a matrix are denoted by $\lVert \cdot \rVert$, respectively.
The maximum and minimum eigenvalues of a matrix are denoted by $\bar{\sigma}(\cdot)$ and $\underline{\sigma}(\cdot)$, respectively.
For a time-varying bounded signal $d(t)$, $\lVert d \rVert_\infty = \sup_{t \geq0} \lVert d(t) \rVert$.
The inner product of $y_1$ and $y_2$ is denoted by $\langle y_1, y_2\rangle$.
Given any compact set $\mathcal{V} \subset \mathbb{R}^q$ and a continuous mapping $p: \mathcal{V} \rightarrow \mathbb{R}^Q$, $\overline{\lVert p(v) \rVert}_{\mathcal{V}} = \mathop{\rm{sup}}_{v \in \mathcal{V}} \lVert p(v)\rVert$.
{\xyr For a continuously differentiable function $\zeta(x)$, $\nabla \zeta(x) = \partial \zeta(x)/\partial x$, and $\mathcal{L}_f\zeta(x)$ is the Lie derivative of $\zeta$ along $f$.
}


\section{Problem formulation} \label{Sec Preliminaries}
Consider a nonlinear control system
\begin{align} \label{eq:nominal_sys}
    \dot{x} = f\left(x\right)  + g\left(x\right)u
\end{align}
where $x  \in {\xyr \mathcal{X}} \subset \mathbb{R}^n$ and $u \in \mathbb{R}^p$ are the state and  input of the system, respectively; 
%
$f: {\xyr \mathcal{X}} \rightarrow \mathbb{R}^n$ and $g : {\xyr \mathcal{X}} \rightarrow \mathbb{R}^{n \times p}$ are locally Lipschitz; 
{\hwz $\mathcal{X}$ is an admissible set}, and $x(0)=x_0$.
It is assumed that $f(0)=0$, and the system is stabilizable and forward complete. 

\subsection{Unconstrained optimal control}
To seek {\xyr a stabilizing}
control law for \eqref{eq:nominal_sys}, one way is to solve the following {\wxy{optimal control problem \cite{lewis2012optimal}} 
\begin{align}\label{e2.1.1.1}
        \mathop{\rm{inf}}_{u \in \mathbb{R}^p} J(x, u) :=\int_{0}^\infty  \underbrace{x(\tau)^{\top} Q x(\tau) + u(\tau)^{\top} R u(\tau)}_{l\left(x(\tau),u(\tau)\right)} d\tau,
\end{align}
where $J$ is the user-defined cost function and $Q,R > 0$.

To address such an optimal control problem, the optimal value function is defined as 
\begin{align} \label{eq:optimal_value_function}
    V^*(x(t)) = \mathop{\rm{inf}}\limits_{u \in \mathbb{R}^p}\int_{t}^\infty l\left(x(\tau), u(\tau)\right) d\tau
\end{align}
and the Hamiltonian function {\hwz is} 
\begin{align} \label{Ham}
    \hspace{-0.1in}H(x, \nabla V^*(x), u) = \nabla V^*(x)^{\top} (f(x) + g(x) u) + l(x, u),
\end{align}
where $\nabla V^*(x) = \partial V^*(x) / \partial x$ and $V^*(0) =0$.

{\hwz Suppose} a continuously differentiable $V^*(x)$ {\xyr with locally Lipschitz $\nabla V^*(x)$} exists. {\hwz Then} it is the unique positive definite solution of the Hamilton-Jacobi-Bellman (HJB) equation
\begin{align}\label{eq:HJB_without_u}
    \mathop{\rm{inf}}\limits_{u \in \mathbb{R}^p} H(x, \nabla V^*(x), u) = 0.
\end{align}
By applying the optimal condition $\partial H/ \partial u = 0$, the optimal control policy can be derived as
\begin{align} \label{eq:optimal_control_policy}
   k^*(x) = -\frac{1}{2}R^{-1}g(x)^{\top} \nabla V^*(x).
\end{align}
{\xyr 
Using \eqref{eq:optimal_control_policy}, the HJB equation \eqref{eq:HJB_without_u} can be further written as 
\begin{align}\label{eq:HJB}
    \hspace{-0.1in}\nabla V^*(x)^\top f(x)  =-x^\top Qx+ \frac{1}{4}\lVert\sqrt{R^{-1}}g(x)^\top \nabla V^*(x) \rVert^2.
\end{align}}
%
\vspace{-0.5cm}
\subsection{Constrained optimal control}
{\hwz Let} $\mathscr{C}$ be a nonempty safe set with no isolated points and 
\begin{align*}
    \mathscr{C} = \{x \in {\xyr \mathcal{X}}|~ h(x) \geq 0 \},
\end{align*}
where $h : {\xyr \mathcal{X}} \rightarrow \mathbb{R}$ is $m$th-order differentiable. 
Then, a constrained optimal control problem is formulated as follows.
\begin{problem} \label{Pro:nominal_safety}
    {\xyw For system \eqref{eq:nominal_sys}},
    %
   {\hwz design} a control law $u$ such that the optimal control problem \eqref{e2.1.1.1} {\hwz is solved} and the safe set $\mathscr{C}$ is forward invariant, i.e.,
    \begin{align}\label{COCP1}
    &\inf_{u \in \mathbb{R}^p} \; \int_{0}^\infty l(x(\tau), u(\tau)) d\tau  \\
    &\textnormal{s.t.} \quad x(t) \in \mathscr{C}, \; \forall t \geq 0, ~ x_0 \in \mathscr{C}\nonumber.
    \end{align}
\end{problem}

\wxy{For \eqref{COCP1}, the control policy \eqref{eq:optimal_control_policy} is {\mg optimal but generally does not satisfy the constraint}.
%
{\hwz Directly} solving \eqref{COCP1} {\hwz is considered to be} computationally expensive and inefficient \cite{ames2017control}, and thus {\mg {\xyr control barrier functions} (CBFs) are recently applied to provide sub-optimal, but computationally efficient, solutions to constrained optimal control problems.}
%
Before presenting a CBF-based solution, we introduce some basic concepts.}
%


\begin{definition}\label{def:least_relative_degree}
({\xyr Least} relative degree \cite{tan2022high}):
{\xyr Consider} a safe set $\mathscr{C}$.
An {\xyr $m$th-order} differentiable function $h:\mathbb{R}^n \rightarrow \mathbb{R}$ is said to have a least relative degree of {\xyr $m$} with respect to \eqref{eq:nominal_sys}, {\xyr if $ \mathcal{L}_g\mathcal{L}_f^{k} h(x)=0,$ $\forall x \in \mathscr{C}$} and $\forall k\in \{0,1,\dots, m-2\}$.\footnote{The relative degree condition is much weaker than the uniform relative degree condition \cite{xiao2021high} that requires ${\xyr\mathcal{L}_g \mathcal{L}_f^{m-1}h(x)} \neq 0$, $\forall x \in \mathscr{C}$.}
\end{definition}

{\xyr
For the $m$th-order differentiable function $h(x)$ with least relative degree of $m$,} define a series of sets as
\begin{align}\label{eq:HO_set}
    \mathscr{C}_i = \{x\in {\xyr \mathcal{X}}|~\psi_{i-1}(x) \geq 0\}, ~ i\in \{1,\cdots,m\},
\end{align}
where $\psi_i: {\xyr\mathcal{X}}\rightarrow \mathbb{R}$ satisfies
\begin{align}\label{eq:psi_i}
    \psi_i(x) = \dot{\psi}_{i-1}(x) + \alpha_i(\psi_{i-1}(x))
\end{align}
with $\psi_0(x) = h(x)$ and $\alpha_1,~\alpha_2,~\cdots,~\alpha_m$ being sufficiently smooth class $\mathcal{K}$ functions.
{\xyr Let $\tilde{\mathscr{C}}=\cap_{i=1}^m\mathscr{C}_i$.
Then the high-order zeroing CBF (HO-ZCBF) is defined as follows.}
\begin{definition}\cite{xiao2021high}\label{d2}
{\xyr (HO-ZCBF):
Let $\mathscr{C}_i$ and $\psi_{i-1}(x)$ be defined by  \eqref{eq:HO_set} and \eqref{eq:psi_i}, respectively.
An $m$th-order differentiable function $h: \mathcal{X} \rightarrow \mathbb{R}$ is an HO-ZCBF for \eqref{eq:nominal_sys} if $\forall x_0 \in \tilde{\mathscr{C}}$, there exist sufficiently smooth} 
{\xyr class $\mathcal{K}$ functions $\alpha_i,~i\in\{1,\cdots, m\}$ such that
\begin{align*}
    \mathop{\rm{sup}}\limits_{u \in \mathbb{R}^p} \big\{\mathcal{L}_f \psi_{m-1}(x) + \mathcal{L}_g \psi_{m-1}(x)u \big\} \geq -\alpha_m(\psi_{m-1}(x))
\end{align*}
for all $x \in \tilde{\mathscr{C}}$.}
\end{definition}


{\xyr
%
%
As established in \cite{xiao2021high}, the existence of an HO-ZCBF guarantees the existence of a control law $u$ that makes $\psi_{m-1}(x)\geq 0$, thereby rendering the set $\tilde{\mathscr{C}}$ forward invariant for system \eqref{eq:nominal_sys}.
Then one can reformulate Problem \ref{Pro:nominal_safety} as the following optimal control problem
\begin{align*}
    &\inf_{u \in \mathbb{R}^p} \; \int_{0}^\infty l(x(\tau), u(\tau)) d\tau  \\
    &\textnormal{s.t.} ~\mathcal{L}_f \psi_{m-1}(x)+\mathcal{L}_g \psi_{m-1}(x)u \geq -\alpha_m (\psi_{m-1}(x)), ~ x_0 \in \tilde{\mathscr{C}}\nonumber,
\end{align*}
where the optimal solution can be derived as \cite{mahmud2021safety}
\begin{align*}
    &u^*=\arg\inf_{u \in \mathbb{R}^p} \;  H(x,\nabla V^*(x),u)  \\
    &\textnormal{s.t.} ~\mathcal{L}_f \psi_{m-1}(x)+\mathcal{L}_g \psi_{m-1}(x)u \geq -\alpha_m (\psi_{m-1}(x)), ~ x_0 \in \tilde{\mathscr{C}}\nonumber.
\end{align*}

%
%
}

\subsection{{\xyr Constrained optimal control under fault/disturbance}}
{\xyr
Now we consider the following disturbed system}
\begin{align} \label{eq:disturbed_sys}
    \dot{x}= f\left(x\right)  + g\left(x\right) (u + u^f),
\end{align}
where $u^f\in\mathbb{R}^p$ represents the unknown actuator fault and/or matched disturbance\footnote{As an actuator fault, $u^f$ can represent force bias caused by motors in unmanned aerial vehicles \cite{ma2019nonlinear}, actuator bias of distributed generators in microgrids \cite{zhai2022distributed}, etc.
    %
    {\mg In addition,} $u^f$ can represent a matched disturbance, which is a common occurrence in practical applications, such as synchronous motors \cite{yan2018robust} and unmanned vehicles \cite{peng2019constrained}.} {\xyr satisfying the following assumption.}

\begin{assumption} \label{a1} 
    The actuator fault/matched disturbance $u^f$ {\hwz is bounded, i.e., $\lVert u^f (t)\rVert_\infty \leq \eta,$
    where {\hwz $\eta > 0$ is} unknown.}
    %
\end{assumption}

{\xyr
This paper aims to solve the following safe robust control problem.

\begin{problem} \label{Prob-SRC}
Consider the disturbed control system \eqref{eq:disturbed_sys} and the safe set $\mathscr{C}$. 
Suppose Assumption \ref{a1} holds.
Design a control law $u$ such that the disturbed system \eqref{eq:disturbed_sys} is input-to-state stable (ISS) with respect to $u^f$, and simultaneously satisfies the safety constraint $h(x(t))\geq 0$ for all $t\geq0$.    
\end{problem}

In the sequel, we shall show that Problem \ref{Prob-SRC} can be addressed by solving a constrained optimal control problem.  

%
As shown in \cite{sun2024safety}, the optimal value function \eqref{eq:optimal_value_function} is an ISS Lyapunov function of system \eqref{eq:disturbed_sys} with respect to $u^f$. Thus the disturbed system \eqref{eq:disturbed_sys} with $u=k^*(x)$ defined in \eqref{eq:optimal_control_policy} is ISS with respect to $u^f$.
To see this, taking time derivative of $V^*$ along \eqref{eq:disturbed_sys} and using the HJB equation \eqref{eq:HJB} yields
\begin{align}\label{eq:ISS}
    \dot{V}^*(x) = &\ \nabla V^{*}(x)^\top (f(x)+g(x)(k^*(x)+u^f)) \notag\\
    \leq &\ \nabla V^*(x)g(x)u^f-x^\top Q x- \frac{1}{4\bar{\sigma}(R)}\lVert g(x)^\top \nabla V^*(x) \rVert^2\notag\\ 
    \leq& -x^\top Q x + u^{f\top} Ru^f\notag\\
    \leq& -\frac{1}{2}\underline{\sigma}(Q)\lVert x\rVert^2, \forall \lVert x\rVert\geq\sqrt{2\bar{\sigma}(R)/\underline{\sigma}(Q)}\lVert u^f \rVert.
\end{align}
To render the set $\tilde{\mathscr{C}}$ forward invariant for system \eqref{eq:disturbed_sys}, a robust form of an HO-ZCBF constraint is} \cite{jankovic2018robust}
\begin{align} \label{eq:disturbed_CBF_condition}
    &\mathcal{L}_f \psi_{m-1}(x) +   \mathcal{L}_g \psi_{m-1}(x)u \notag\\
    &- \lVert \mathcal{L}_g \psi_{m-1}(x)\rVert ~\lVert u^f \rVert_{\infty} \geq -\alpha_{m}(\psi_{m-1}(x)).
\end{align}
{\xyr Now we are ready to formulate a constrained robust optimal control problem, which addresses Problem \ref{Prob-SRC}.
\begin{problem} \label{Pro:robust_safety}
    Consider the disturbed control system \eqref{eq:disturbed_sys}
    and the safe set $\mathscr{C}$. 
    Suppose Assumption \ref{a1} holds.
    Design a control law $u$ such that 
    \begin{align}\label{COCP2}
    &\inf_{u \in \mathbb{R}^p} \; H(x,\nabla V^*(x),u) \\
    &~~\textnormal{s.t.}~~ \mathcal{L}_f \psi_{m-1}(x)+  \mathcal{L}_g \psi_{m-1}(x)u \notag\\
    &~~~~~~~~~ -\lVert \mathcal{L}_g \psi_{m-1}(x)\rVert ~\lVert u^f \rVert_{\infty} \geq -\alpha_m(\psi_{m-1}(x)), ~x_0 \in \tilde{\mathscr{C}}.\nonumber
    \end{align}
\end{problem}
%


\begin{remark}
%
%
Obviously, the robust HO-ZCBF \eqref{eq:disturbed_CBF_condition} requires the bound of $u^f$, i.e., $\lVert u^f \rVert_{\infty}$; and thus it fails to guarantee safety under Assumption \ref{a1}.
%
%
To avoid using the bound of $u^f$, a robust HO-ZCBF is designed as \cite{ersin2025safety}
\begin{align*}
    &\mathcal{L}_f \psi_{m-1}(x) +   \mathcal{L}_g \psi_{m-1}(x)(u+\hat{u}^f) \\
    &- \lVert \mathcal{L}_g \psi_{m-1}(x)\rVert ~\lVert u^f -\hat{u}^f \rVert_{\infty} \geq -\alpha_m(\psi_{m-1}(x)),
\end{align*}
where $\hat{u}^f$ is the estimation of $u^f$, generated by a disturbance observer. But the bound $\lVert u^f -\hat{u}^f \rVert_{\infty}$ should be known, which further implies that $\lVert \dot{u}^f\rVert_\infty$ is known \cite{ersin2025safety}.
\end{remark}
%

%
%
}

\section{High-order reciprocal-based CBF} \label{Sec-horcbf}
{\xyr
In this section, we propose a new class of robust CBF to enforce safety without any prior knowledge of the disturbance.  
We first consider the case where the least relative degree is $m=1$.
Since $\lVert u^f\rVert_\infty$ is unknown, we construct a continuously differentiable function $b: \operatorname{Int}(\mathscr{C}) \rightarrow \mathbb{R}$ satisfying 
\begin{align}\label{eq:constraint_a}
    \frac{1}{\breve{\gamma}_1\left(h(x)\right)}\leq  b(x)  \leq  \frac{1}{\breve{\gamma}_2\left(h(x)\right)}
\end{align}
for some class $\mathcal{K}$ functions $\breve{\gamma}_1$ and $\breve{\gamma}_2$, where
\begin{align*}
    \operatorname{Int}(\mathscr{C}) = \{x\in \mathcal{X}|~h(x)>0\}.
\end{align*}
The left inequality of \eqref{eq:constraint_a} guarantees that $b(x) \geq \lVert u^f\rVert_\infty$ when $x$ gets close enough to the boundary of set $\mathscr{C}$, i.e.,
$$
b(x) \geq \lVert u^f\rVert_\infty,~ \forall x\in \{x\in\mathcal{X}|~0<h(x)\leq \breve{\gamma}_1^{-1}(\lVert u^f\rVert_\infty)\},
$$
where $\breve{\gamma}_1^{-1}$ is the inverse of $\breve{\gamma}_1$,
and the right inequality of \eqref{eq:constraint_a} controls the growing rate of $b$.
A simple choice of $b$ is $k_b/h$ for arbitrary positive $k_b$.
Condition \eqref{eq:constraint_a} is necessary for the definition of reciprocal CBF (\cite{ames2017control}).
The reciprocal CBF constraint $\dot{b}(x)\leq \breve{\gamma}_3(h(x))$ ($\breve{\gamma}_3$ is a class $\mathcal{K}$ function) for system \eqref{eq:disturbed_sys} is  
\begin{align}\label{eq:rcbf_condition}
    \mathcal{L}_f b(x) + \mathcal{L}_g b(x) (u+u^f) \leq \breve{\gamma}_3(h(x)).
\end{align}
%
As is shown in \cite{ames2017control}, the set $\operatorname{Int}(\mathscr{C})$ is forward invariant if \eqref{eq:rcbf_condition} holds.
Using Young's inequality and $\lVert u^f\rVert_\infty \leq \eta$, a robust reciprocal CBF can be obtained as 
\begin{align*}
    \mathcal{L}_f b(x) + \mathcal{L}_g b(x)u + \lVert \mathcal{L}_g b(x) \rVert^2 + \frac{1}{4} \eta^2 \leq \breve{\gamma}_3(h(x)).
\end{align*}
The following lemma shows how the reciprocal-like function $b(x)$ is used to reject disturbance without knowing its bound.

\begin{lemma}\label{lm:b_rcbf}
    Consider the disturbed system \eqref{eq:disturbed_sys} and the safe set $\mathscr{C}$.
    Suppose Assumption \ref{a1} holds.
    If $x_0 \in \operatorname{Int}(\mathscr{C})$ and there exists a class $\mathcal{K}$ function $\breve{\gamma}_4$ such that
    \begin{align}\label{eq:robust_rcbf}
    \hspace{-0.1in}\mathcal{L}_f b(x) + \mathcal{L}_g b(x)u + \lVert \mathcal{L}_g b(x) \rVert^2+\breve{\gamma}_4(b(x)) \leq \breve{\gamma}_3(h(x))
    \end{align}
    for all $x \in \operatorname{Int}(\mathscr{C})$, then $\operatorname{Int}(\mathscr{C})$ is forward invariant for \eqref{eq:disturbed_sys}.
\end{lemma}
\begin{proof}
    Using Young's inequality, $\dot{b}(x)$ along \eqref{eq:disturbed_sys} is 
    \begin{align*}
        \dot{b}(x) =& \mathcal{L}_f b(x) + \mathcal{L}_g b(x) (u+u^f)\notag\\
        \leq & \mathcal{L}_f b(x) + \mathcal{L}_g b(x) u + \lVert \mathcal{L}_g b(x) \rVert^2 + \frac{1}{4} \eta^2.
    \end{align*}
    By \eqref{eq:constraint_a} and \eqref{eq:robust_rcbf}, $\dot{b}(x)$ can be further put as
    \begin{align}\label{eq:lm_rcbf_b_dot}
        \dot{b}(x)&  \leq -\breve{\gamma}_4(b(x))  +\breve{\gamma}_3(h(x)) + \frac{1}{4} \eta^2\notag\\
        &  \leq -\breve{\gamma}_4(b(x))  +\breve{\gamma}_3 \circ\breve{\gamma}_2^{-1}\bigg(\frac{1}{b(x)}\bigg) + \frac{1}{4} \eta^2,
    \end{align}
    where $\circ$ represents the composition of functions and $\breve{\gamma}_2^{-1}$ denotes the inverse of the class $\mathcal{K}$ function $\breve{\gamma}_2$.
    Let $\tilde{\mathcal{F}}(b) = -\breve{\gamma}_4(b) + \breve{\gamma}_3 \circ\breve{\gamma}_2^{-1}(\frac{1}{b}) + \frac{1}{4}\eta^2$. 
    Then \eqref{eq:lm_rcbf_b_dot} implies that $\dot{b}(x)\leq \tilde{\mathcal{F}}(b)$.
    According to Lemma 4.2 in \cite{khalil2002nonlinear}, $\breve{\gamma}_3 \circ\breve{\gamma}_2^{-1}$ is a class $\mathcal{K}$ function.
    Since class $\mathcal{K}$ function monotonically increases, $\tilde{\mathcal{F}}(b)$ is strictly decreasing on $b\in(0,\infty)$.
    Further, $\lim_{b\rightarrow 0} \tilde{\mathcal{F}}(b) = +\infty$ and $\lim_{b\rightarrow \infty}\tilde{\mathcal{F}}(b) = -\infty$.
    By the Intermediate Value Theorem, let $\xi_b \in(0,\infty)$ be the unique point such that $\tilde{\mathcal{F}}(\xi_b) = 0$.
    Then \( \dot{b}(x) \leq \tilde{\mathcal{F}} (b)\le0\) for all \(b\geq\xi_b\).
    This precludes the state trajectory that approaches the boundary of $\mathscr{C}$.
    To see this, define
       $ \mathcal{C}_b = \big\{x\in\operatorname{Int}(\mathscr{C})|~b(x) \leq \bar{\xi}_b\}   \text{ and } 
        \partial{\mathcal{C}}_b= \big\{x\in\operatorname{Int}(\mathscr{C})|~b(x) = \bar{\xi}_b\big\}$
    for a positive $\bar{\xi}_b$.
    For all $x_0 \in \operatorname{ Int}(\mathscr{C})$, one has $b(x_0) < \infty$ by \eqref{eq:constraint_a}.
    %
    %
    %
    Let $\bar{\xi}_b = \max \{\xi_b, b(x_0)\}$,
    and thus $x_0 \in \mathcal{C}_b$.
    Then $\dot{b}(x)\leq 0$, $\forall x \in \partial{\mathcal{C}}_b$. 
    This shows that $\mathcal{C}_b$ is forward invariant by using Nagumo's Theorem \cite{blanchini2008set} and noting that $\mathcal{C}_b$ is closed relative to $\operatorname{Int}(\mathscr{C})$.
    %
    %
    %
    Note that $\mathcal{C}_b \subset \operatorname{Int}(\mathscr{C})$, which implies that $x(t)\in\operatorname{Int}(\mathscr{C})$ for all $x_0 \in \mathcal{C}_b$.
    %
    %
    Given that the choice of $x_0$ is arbitrary in $\operatorname{Int}(\mathscr{C})$, every initial state yields a trajectory that stays in $\operatorname{Int}(\mathscr{C})$. Therefore, $\operatorname{Int}(\mathscr{C})$ is forward invariant.
\end{proof}

%

Now we are ready to extend the result in Lemma \ref{lm:b_rcbf} to handle arbitrary-relative-degree constraints.
By Nagumo's Theorem \cite{blanchini2008set}, $\psi_{i-1}(x) \geq 0$ if $\dot{\psi}_{i-1}(x) + \alpha_i(\psi_{i-1}(x)) \geq 0$.
For the function $h(x)$ with least relative degree $m$, the recursive construction \eqref{eq:psi_i} remains identical for both  undisturbed system \eqref{eq:nominal_sys} and  disturbed system \eqref{eq:disturbed_sys} except for the last term $\psi_m$.
Specifically, only the last condition $\psi_m(x) = \dot{\psi}_{m-1}(x)+\alpha_m(\psi_{m-1}(x))$ is affected by the matched disturbance/fault $u^f$. Thus we need to use a reciprocal-like term in $\psi_{m}(x)$ to reject unknown disturbance effect.
Note that $\psi_i(x)\geq 0$ can still be used to guarantee that $\psi_{i-1}(x)\geq 0$ for $i=\{1,\dots,m-1\}$.}

Define $${\xyr \bar{\mathscr{C}} = \mathscr{C}_1\cap \cdots\cap\mathscr{C}_{m-1}\cap \operatorname{Int}(\mathscr{C}_m)},$$ where $\operatorname{Int}(\mathscr{C}_m) = \{x\in \mathcal{X}|~\psi_{m-1}(x)>0\}$.
Then the high-order {\xyr reciprocal-based CBF} (HO-RCBF) is defined as follows.

\begin{definition} \label{d4} (HO-RCBF):
    Let $\mathscr{C}_i$ and $\psi_{i-1}(x)$ be defined by  \eqref{eq:HO_set} and \eqref{eq:psi_i} for $i \in \{1, \cdots, m\}$, respectively. 
    A continuously differentiable function $B: \operatorname{Int}(\mathscr{C}_{m}) \rightarrow \mathbb{R}$ is an HO-RCBF for undisturbed system \eqref{eq:nominal_sys} if $\forall x_0 \in \bar{\mathscr{C}}$, there exist class $\mathcal{K}$ functions $\gamma_1$, $\gamma_2$, $\gamma_3$ and $\gamma_4$ such that 
    \begin{align}
    \frac{1}{\gamma_1\left(\psi_{m-1}(x)\right)}&\leq  B(x)  \leq  \frac{1}{\gamma_2\left(\psi_{m-1}(x)\right)},\label{eq:constraint_a1}
    \end{align}
    \begin{align}
     \hspace{-0.15in} \mathop{\rm{inf}}\limits_{u \in \mathbb{R}^p} \big\{\mathcal{L}_f B
   + \mathcal{L}_g &B u {\xyr  +\lVert \mathcal{L}_g B \rVert^2+ \gamma_3(B )}\big\} \leq\gamma_4\left(\psi_{m-1}\right)\label{eq:constraint_b}
    \end{align}
    for all $x \in \bar{\mathscr{C}}$.
    {\xyr If Assumption \ref{a1} holds and $B$ is an HO-RCBF for system \eqref{eq:nominal_sys}, then it is also an HO-RCBF for system \eqref{eq:disturbed_sys}.}
\end{definition}

In the following Lemma, we {\xyr show} that the existence of an HO-RCBF for {\xyr the undisturbed system} \eqref{eq:nominal_sys} implies the existence of a {\xyr control} policy that renders $ \bar{\mathscr{C}}$ forward invariant {\xyr for \eqref{eq:nominal_sys}}.

\begin{lemma} \label{lm1}
    {\xyr  Consider the sets $\mathscr{C}_i$,~$i\in\{1,\cdots,m\}$.
    If $B(x)$ is an HO-RCBF for system \eqref{eq:nominal_sys} and $x_0 \in \bar{\mathscr{C}}$, then any locally Lipschitz controller $u$ satisfying}
    \begin{align}\label{eq:horcbf}
     \mathcal{L}_f B 
    +\mathcal{L}_g B u {\xyr +\lVert \mathcal{L}_g B \rVert^2+\gamma_3(B)} 
    \leq \gamma_4\left(\psi_{m-1}\right)
    \end{align}
    renders the set $\bar{\mathscr{C}}$ forward invariant for system \eqref{eq:nominal_sys}.
\end{lemma}
\begin{proof}
    {\xyr Using \eqref{eq:constraint_a1} and \eqref{eq:constraint_b},  $\dot{B}$ along the system \eqref{eq:nominal_sys} is
    \begin{align} \label{eq:dot_B}
        \dot{B} \leq& -\gamma_3(B) -\lVert \mathcal{L}_gB \rVert^2+ \gamma_4 \circ\gamma_2^{-1}\left(\frac{1}{B}\right),
    \end{align}
    where $\gamma_2^{-1}$ is the inverse of the class $\mathcal{K}$ function $\gamma_2$.

    Consider the following dynamical system 
    \begin{align} \label{eq:dynamical_sys}
        \dot{z}(t)= -\gamma_3(z(t)) + \gamma_4 \circ\gamma_2^{-1}\left(\frac{1}{z(t)}\right),z(0) = B(x_0).
    \end{align}
    %
    According to Lemma 4.2 in \cite{khalil2002nonlinear}, $\gamma_4 \circ\gamma_2^{-1}$ is a class $\mathcal{K}$ function.
    Let $\mathcal{F}(z) = -\gamma_3(z) + \gamma_4\circ\gamma_2^{-1}(1/z)$.
    Then $\lim_{z\rightarrow 0} \mathcal{F}(z) = +\infty$ and $\lim_{z\rightarrow \infty} \mathcal{F}(z) = -\infty$.
    By $x_0 \in \operatorname{Int}(\mathscr{C}_m)$ and \eqref{eq:constraint_a1}, one has $z(0) < \infty$.
    %
    %
    Then following the similar development as in  the proof of Lemma \ref{lm:b_rcbf}, we can show that $z(t)<\infty,~ \forall t\geq 0$.
    %
    %
    Using the Comparison Lemma \cite{khalil2002nonlinear}, the solution of \eqref{eq:dot_B} satisfies
    \begin{align*}
         B(x(t)) \leq z(t) < \infty,~~\forall t\geq 0.
    \end{align*}
    From \eqref{eq:constraint_a1}, $B(x(t))<\infty$ implies $\psi_{m-1}(x(t)) > 0$ for all $t\geq 0$, i.e.,
    $\dot{\psi}_{m-2}(x(t)) + \alpha_{m-1}(\psi_{m-2}(x(t))) > 0.$
    By Nagumo's Theorem \cite{blanchini2008set} and $\psi_{m-2}(x_0)\geq 0$, one has $\psi_{m-2}(x(t))\geq 0$ for all $t\geq 0$, i.e.,
    $\dot{\psi}_{m-3}(x(t)) + \alpha_{m-2}(\psi_{m-3}(x(t))) \geq 0.$
    Again, we have $\psi_{m-3}(x(t))\geq 0, \forall t\geq 0, \forall \psi_{m-3}(x_0)\geq 0$.
    Iteratively, we have $x(t) \in \mathscr{C}_{i}$, $\forall i\in \{1,\cdots,m-1\}$ and $x(t) \in \operatorname{Int}(\mathscr{C}_{m})$ for all $t \geq 0$.
    Therefore, $ \bar{\mathscr{C}}$ is forward invariant.}
\end{proof}

{\xyr
Now consider the disturbed system \eqref{eq:disturbed_sys}.
The following theorem shows that if \eqref{eq:horcbf} holds, then the  set $\bar{\mathscr{C}}$ can be rendered forward invariant for the disturbed system.
\begin{thm} \label{thm:HORCBF}
    Under Assumption \ref{a1}, if $B(x)$ is an HO-RCBF for the undisturbed system \eqref{eq:nominal_sys} and $x_0 \in \bar{\mathscr{C}}$, then any Lipschitz continuous controller $u$ satisfying \eqref{eq:horcbf} renders the set $\bar{\mathscr{C}}$ forward invariant for the disturbed system \eqref{eq:disturbed_sys}.
\end{thm}
\begin{proof}
    Using \eqref{eq:constraint_a1} and \eqref{eq:constraint_b}, $\dot{B}$ along the system \eqref{eq:disturbed_sys} is
    \begin{align*} 
        \dot{B} \leq-\gamma_3(B) -\lVert \mathcal{L}_gB \rVert^2+ \gamma_4 \circ\gamma_2^{-1}\left(\frac{1}{B}\right) +\mathcal{L}_g Bu^f.
    \end{align*}
    Using Young's Inequality, one has $\lVert\mathcal{L}_g B\rVert \lVert  u^f\rVert \leq \lVert \mathcal{L}_g B\rVert^2 +\frac{1}{4} \lVert u^f\rVert^2$, and thus $\lVert \mathcal{L}_g Bu^f \rVert \leq \lVert \mathcal{L}_g B\rVert^2 +\frac{1}{4} \eta^2$.
    Then  
    \begin{align} \label{eq:dot_B_dis_further}
        \dot{B} \leq-\gamma_3(B) + \gamma_4 \circ\gamma_2^{-1}\left(\frac{1}{B}\right) +\frac{1}{4} \eta^2.
    \end{align}
    %
    %
    Since the right-hand side of \eqref{eq:dot_B_dis_further} shares the same monotonicity property with \eqref{eq:lm_rcbf_b_dot}, it then follows the proof of Lemma \ref{lm:b_rcbf} that $B(x(t)) < \infty,~\forall t\geq 0$ for system \eqref{eq:disturbed_sys}.
    Following a similar development as in the proof of Lemma \ref{lm1}, we can show that $B(x(t)) < \infty$ implies the forward invariance of $\bar{\mathscr{C}}$.
\end{proof}
}

{\xywr
\begin{remark}
    The proposed HO-RCBFs can strictly guarantee safety in the presence of disturbances due to the reciprocal barrier structure. 
    However, two limitations still remain.
    First, as the system state approaches the boundary of the set $\mathscr{C}_m$, the HO-RCBFs may generate excessively large control inputs, which can become incompatible with actuator constraints.
    %
    %
    %
    Second, HO-RCBFs are not well defined on the boundary of the set $\mathscr{C}_m$, and measurement noise or small disturbances may be amplified near the boundary due to the reciprocal structure.
    %
\end{remark}
}

By {\xyr Theorem \ref{thm:HORCBF}}, given a {\xyr robust} constrained optimal control problem, one can always convert the safety objective to rendering the set $\bar{\mathscr{C}}$ forward invariant {\mg by constructing an HO-RCBF.} 
{\xyr
Now, we can solve the following optimal control problem}
\begin{align} \label{eq:ocp_final}
    &\inf_{u \in \mathbb{R}^p} \; H(x,\nabla V^*(x),u)  \\
    &\textnormal{s.t.} ~\mathcal{L}_f B+ \mathcal{L}_g B u+ {\xyr\lVert \mathcal{L}_gB \rVert^2  + \gamma_3(B)}\leq \gamma_4\left(\psi_{m-1}\right),~x_0 \in \bar{\mathscr{C}}\notag
\end{align}
{\xyr to get a control law for Problem \ref{Prob-SRC}.}
{\xyr To solve \eqref{eq:ocp_final},} define the Lagrangian function
\begin{align*}
    L(x, \nabla& V^*(x),  u, \lambda ) =\ H(x, \nabla V^*(x), u) \\
    + \lambda& (  \mathcal{L}_f B+ \mathcal{L}_g B u+ {\xyr\lVert \mathcal{L}_gB \rVert^2  + \gamma_3(B)}-\gamma_4\left(\psi_{m-1}\right)) ,
\end{align*}
where $\lambda : \mathbb{R}^n\rightarrow\mathbb{R}$ is the Lagrange multiplier.

Using the Karush-Kuhn-Tucker (KKT) conditions \cite{boyd2004convex}, {\xyr the solution is optimal if the following conditions hold}
\begin{align*}
      {\xyr \frac{\partial L(x, \nabla V^*,  u^*, \lambda^* )}{\partial u}=}\frac{\partial H{\xyr(x, \nabla V^*, u^*)}}{\partial u} + {\xyr \lambda^*} \mathcal{L}_g B(x) ^{\top}  =&\ 0, \\
      {\xyr \lambda^*}\left(  {\xyr \mathcal{L}_f B+ \mathcal{L}_g B u+ \lVert \mathcal{L}_gB \rVert^2  + \gamma_3(B)- \gamma_4\left(\psi_{m-1}\right)} \right)= &\ 0, \\
      {\xyr \mathcal{L}_f B+ \mathcal{L}_g B u+ \lVert \mathcal{L}_gB \rVert^2  + \gamma_3(B)- \gamma_4\left(\psi_{m-1}\right)}\leq &\ 0\\
     {\xyr \lambda^*} \geq &\ 0.
\end{align*}

Solving the above set of equations/inequalities yields the safe optimal control policy \cite{bandyopadhyay2025lagrangian}
\begin{align} \label{KKTpolicy}
    u^*(x) = -{\xyr \lambda^*} R^{-1} \mathcal{L}_g B({\xyr x})^\top +k^*(x),
\end{align}
and the optimal Lagrange multiplier {\xyr $\lambda^*$ is}
$$
\max\bigg(\frac{ \mathcal{L}_f B+ \mathcal{L}_g B k^*{\xyr+ \lVert \mathcal{L}_gB \rVert^2  + \gamma_3(B)}- \gamma_4\left(\psi_{m-1}\right)}{\mathcal{L}_g B R^{-1} \mathcal{L}_g B^{\top}},0\bigg).
$$

{\xyr
Following a similar analysis conducted in Lemma 2 from \cite{bandyopadhyay2025lagrangian}, it follows that \eqref{eq:ocp_final} is feasible if $\mathcal{L}_gB(x)\neq 0$ for all $x \in \bar{\mathscr{C}}$.
If $\mathcal{L}_g\mathcal{L}_f^{m-1}h(x)\neq 0$  for all $x \in \mathscr{C}$, (i.e., $h(x)$ has a uniform relative degree of $m$), one can always construct a $B(x)$ satisfying $\mathcal{L}_gB(x)\neq 0$ for all $x \in \bar{\mathscr{C}}$.
For example, let $B(x) = \frac{1}{\psi_{m-1}(x)}$.
However, the uniform relative degree condition cannot be always satisfied (see Example 1 in \cite{tan2022high}).
When some points $x \in\mathscr{C}$ exist such that $\mathcal{L}_g \psi_{m-1}(x)= 0$, the optimal control problem \eqref{eq:ocp_final} may become infeasible, and consequently, the control law \eqref{KKTpolicy} may not exist.
In the next section, we shall propose a safeguard policy to guarantee safety when $\mathcal{L}_g\mathcal{L}_f^{m-1}h(x)=0$ for some $x \in \mathscr{C}$ (i.e., $h(x)$ has a least relative degree of $m$).
}

\section{High-order safeguard policy design}\label{Sec-adaptive}

\subsection{High-order safeguard policy} \label{section-HO-safeguard}
Considering the control policy \eqref{KKTpolicy}, we replace $\lambda$ with a positive safeguard gain $K_s$ to {\xyr relax the uniform relative degree condition, i.e., we allow the existence of some $x \in \mathscr{C}$ such that $\mathcal{L}_g \mathcal{L}_f^{m-1}h(x) = 0$}.
{\xyr Further, to make $u(0)= 0$, we use}
\begin{align} \label{eq:ho_rcbf_example}
    \mathcal{B}(x) = \frac{1}{2}(B(x) - B(0))^2
\end{align}
to replace $B(x)$.
%
%
This yields the high-order safeguard policy
\begin{align}\label{eq:ho_safeguard}
    u^s = -K_s R^{-1} \mathcal{L}_g \mathcal{B}({\xyr x})^\top.
\end{align}
%
Compared with \eqref{KKTpolicy}, the proposed controller \eqref{eq:ho_safeguard} does not need to compute $\lambda^*$,
which implies that
{\xyr $\mathcal{L}_g B(x)\neq 0$ is not required for all $x \in \bar{\mathscr{C}}$.
Thus, we do not need the uniform relative degree condition to guarantee $\mathcal{L}_g B(x)\neq 0$ for all $x\in \bar{\mathscr{C}}$, thereby relaxing the uniform relative degree condition. 
Also, without computing $\lambda^*$,  computational burden is reduced.} 
%

%
{\xyr Before showing how the safeguard policy \eqref{eq:ho_safeguard} renders the set $\bar{\mathscr{C}}$ forward invariant under the least relative degree condition, we make the following assumption.}
\begin{assumption} \label{a2}
{\xyr For the function $h(x)$ with least relative degree of $m$,}
the following conditions hold:\footnote{The first condition in Assumption \ref{a2} excludes the case where $0 \in {\xyr \partial\mathscr{C}_m}$, as {\xyr HO-RCBFs are not defined} in this scenario.
{\xyr 
The second condition guarantees that the existence of control forces that render the set $\bar{\mathscr{C}}$ forward invariant under the least relative degree condition.}}
\begin{enumerate} 
    \item The origin $0\in\bar{\mathscr{C}}$, and there exists a neighborhood $N_l$ of the origin such that $N_l \cap \partial\mathscr{C}_m\neq \emptyset$.
    \item {\xyr For all $x_0\in \bar{\mathscr{C}}$, there exists a positive constant $\bar{\psi}_{c}$ such that $\lim_{x\rightarrow\partial\mathscr{C}_m}\lVert \mathcal{L}_g \psi_{m-1}(x) \rVert \geq  \bar{\psi}_{c}$.}
\end{enumerate}
\end{assumption}


{\xyr 
The following lemma illustrates that the forward invariance of $\bar{\mathscr{C}}$ can be also verified by $\mathcal{B}(x)$.

\begin{lemma} \label{lm:B_h}
    Consider the undisturbed system \eqref{eq:nominal_sys}.
    Suppose $x_0 \in \bar{\mathscr{C}}$ and Assumptions \ref{a1}-\ref{a2} hold.
    If $\mathcal{B}(x(t)) < \infty$ holds for all $t \geq 0$ along the trajectories of the system \eqref{eq:nominal_sys}, then the set $\bar{\mathscr{C}}$ is forward invariant for the system \eqref{eq:nominal_sys}.
    The same result holds for the disturbed system \eqref{eq:disturbed_sys}.
\end{lemma}
\begin{proof}
    By Assumption \ref{a2},   $B(0)<\infty$. Hence, from \eqref{eq:ho_rcbf_example}, one can see that $\mathcal{B}(x)<\infty$ implies $B(x)<\infty$.
    From \eqref{eq:constraint_a}, $B(x)$ is lower bounded by $\frac{1}{\gamma_1(\psi_{m-1}(x))}$.
    Then $B(x)<\infty$ implies that $\frac{1}{\gamma_1(\psi_{m-1}(x))} < \infty$, and thus $\psi_{m-1}(x) > 0$.
    If $\mathcal{B}(x(t))<\infty$ holds for all $t\geq 0$ along system \eqref{eq:nominal_sys} (system \eqref{eq:disturbed_sys}), then $\psi_{m-1}(x(t))>0$, $\forall t\geq 0$.
    Following the proof in Theorem \ref{thm:HORCBF}, the set $\bar{\mathscr{C}}$ is forward invariant if $\psi_{m-1}(x(t))>0$, $\forall t\geq 0$.
    Thus, one can conclude that for system \eqref{eq:nominal_sys} (system \eqref{eq:disturbed_sys}), $\bar{\mathscr{C}}$ is forward invariant if $\mathcal{B}(x(t)) < \infty$, $\forall t\geq 0$.
\end{proof}
}

{\xyr Suppose the admissible set $\mathcal{X}$ is compact in the sequel.}
The following result illustrates that {\xyr the safeguard policy \eqref{eq:ho_safeguard} can still guarantee $\bar{\mathscr{C}}$ forward invariant for system \eqref{eq:nominal_sys} (system \eqref{eq:disturbed_sys}) under the least relative degree condition.}
%

\begin{thm} \label{thm:ho_safeguard_safety}
    {\xyr 
    Consider the function $h(x)$ with least relative degree of $m$ and the sets $\mathscr{C}_i,~i\in \{1,\cdots,m\}$.
    Let $\mathcal{H}: \mathbb{R} \rightarrow \mathbb{R}$ be a continuously differentiable function satisfying
    \begin{align} \label{eq:ho_safeguard_condition}
        \frac{1}{\beta_1(s)}\leq\mathcal{H}(s) \leq\frac{1}{\beta_2(s)}~~\textnormal{and}~~\lim_{s\rightarrow0} \left\lvert\frac{\partial \mathcal{H}(s)}{\partial s}\right\rvert= \infty
    \end{align}
    for some class $\mathcal{K}$ functions $\beta_1,\beta_2$.
    If Assumptions \ref{a1}-\ref{a2} hold, $x_0 \in \bar{\mathscr{C}}$ and $B(x) = \mathcal{H}(\psi_{m-1}(x))$, then the control policy $u=u^s(x)$ as defined in \eqref{eq:ho_safeguard} renders the set $\bar{\mathscr{C}}$ forward invariant for both undisturbed system \eqref{eq:nominal_sys} and disturbed system \eqref{eq:disturbed_sys}.}    
\end{thm}
{\xyr
\begin{proof}
    For the undisturbed system \eqref{eq:nominal_sys}, let $f_{cl}(x) = f(x)$.
    By the chain rule, one can express $\dot{\mathcal{B}}(x)$ along \eqref{eq:nominal_sys} as
    \begin{align*}
        \dot{\mathcal{B}}(x) = & \frac{\partial \mathcal{B}(x)}{\partial \psi_{m-1}(x)}(\mathcal{L}_{f_{cl}} \psi_{m-1}(x) + \mathcal{L}_g \psi_{m-1}(x) u^s(x) )  \\
        \leq &\left\lvert\frac{\partial \mathcal{B}(x)}{\partial \psi_{m-1}(x)}\right\rvert^2 \big(- K_s  \mathcal{L}_g \psi_{m-1}(x) R^{-1} \mathcal{L}_g\psi_{m-1}(x)^{\top}\\
        & +\left\lvert\frac{\partial \mathcal{B}(x)}{\partial \psi_{m-1}(x)}\right\rvert^{-1}\overline{\lVert\mathcal{L}_{f_{cl}}\psi_{m-1}(x)\rVert}_{\mathcal{X}}\bigg).
    \end{align*}
    %
    Since $f(x)$ is locally Lipschitz and $\psi_{m-1}(x)$ is continuously differentiable, both $\overline{\lVert f_{cl}(x)\rVert}_{\mathcal{X}}$ and $\overline{\lVert\nabla \psi_{m-1}(x)\rVert}_{\mathcal{X}}$ are bounded.
    By $\overline{\lVert \mathcal{L}_{f_{cl}} \psi_{m-1}(x)\rVert}_{\mathcal{X}} \leq \overline{\lVert f_{cl}(x)\rVert}_{\mathcal{X}}\overline{\lVert\nabla \psi_{m-1}(x)\rVert}_{\mathcal{X}}$, $\overline{\lVert \mathcal{L}_{f_{cl}} \psi_{m-1}(x)\rVert}_{\mathcal{X}}$ is bounded.
    From \eqref{eq:ho_rcbf_example}, we have $$\frac{\partial \mathcal{B}(x)}{\partial \psi_{m-1}(x)}=(\mathcal{H}(\psi_{m-1}(x)) - \mathcal{H}(\psi_{m-1}(0))) \frac{\partial \mathcal{H}(\psi_{m-1}(x))}{\partial \psi_{m-1}(x)}.$$
    By Assumption \ref{a2}, one has $\psi_{m-1}(0)>0$.
    The condition \eqref{eq:ho_safeguard_condition} implies $\mathcal{H}(\psi_{m-1}(0))<\infty$ and $\lim_{x\rightarrow\partial\mathscr{C}_m}\mathcal{H}(\psi_{m-1}(x))=\infty$. 
    Then one has $\lim_{x\rightarrow\partial \mathscr{C}_m} \left\lvert \frac{\partial \mathcal{B}(x)}{\partial \psi_{m-1}(x)}\right\rvert =\infty$.
    Since $\overline{\|\mathcal{L}_{f_{cl}}\psi_{m-1}(x)\|}_{\mathcal{X}}$ is bounded, it follows that
    \begin{align*}
        \lim_{x\rightarrow\partial \mathscr{C}_m}\left\lvert \frac{\partial \mathcal{B}(x)}{\partial \psi_{m-1}(x)}\right\rvert^{-1}\overline{\lVert \mathcal{L}_{f_{cl}} \psi_{m-1}(x)\rVert}_{\mathcal{X}} = 0.
    \end{align*}
    By $\mathcal{L}_g \psi_{m-1}(x) R^{-1} \mathcal{L}_g\psi_{m-1}(x)^{\top} \geq \bar{\sigma}(R)^{-1}\lVert \mathcal{L}_g\psi_{m-1}(x) \rVert^2$ and Assumption \ref{a2}, one has
    \begin{align*}
    \lim_{x\rightarrow\partial \mathscr{C}_m}\bigg(\left\lvert\frac{\partial \mathcal{B}(x)}{\partial \psi_{m-1}(x)}\right\rvert^{-1}&\overline{\lVert \mathcal{L}_{f_{cl}}\psi_{m-1}(x)\rVert}_{\mathcal{X}}\\
    - K_s\mathcal{L}_g \psi_{m-1}(x) R^{-1} \mathcal{L}_g\psi_{m-1}&(x)^{\top} \bigg) \leq -K_s\bar{\sigma}(R)^{-1}\bar{
    \psi}_{c}^2.
    \end{align*}
    Hence, one has $\lim_{x\rightarrow\partial \mathscr{C}_m} \dot{\mathcal{B}}(x) < 0$.
    According to the proof of Theorem 1 in \cite{max2023safe}, this precludes the existence of trajectories that enter $\partial\mathscr{C}_m$, and thus $\mathcal{B}(x(t))< \infty$ for all $t\geq 0$.
    It then follows from Lemma \ref{lm:B_h} that the set $\bar{\mathscr{C}}$ is forward invariant for the undisturbed system \eqref{eq:nominal_sys}.

    For the disturbed system \eqref{eq:disturbed_sys}, redefine $f_{cl}(x)=f(x) + g(x)u^f$.
    Since $g(x)$ is locally Lipschitz, $\overline{\lVert g(x) \rVert}_{\mathcal{X}}$ is bounded.
    Then the fact that  $\overline{\lVert\nabla \psi_{m-1}(x)\rVert}_{\mathcal{X}}$ is bounded and $\overline{\lVert \mathcal{L}_{g} \psi_{m-1}(x)\rVert}_{\mathcal{X}} \leq \overline{\lVert g(x)\rVert}_{\mathcal{X}}\overline{\lVert\nabla \psi_{m-1}(x)\rVert}_{\mathcal{X}}$ illustrate the boundedness of $\overline{\lVert\mathcal{L}_{g}\psi_{m-1}(x)\rVert}_{\mathcal{X}}$.
    Recalling $\lVert u^f \rVert_{\infty}\leq \eta$ and $\overline{\lVert\mathcal{L}_{f_{cl}}\psi_{m-1}(x)\rVert}_{\mathcal{X}} \leq \overline{\lVert\mathcal{L}_{f}\psi_{m-1}(x)\rVert}_{\mathcal{X}} + \overline{\lVert\mathcal{L}_{g}\psi_{m-1}(x)\rVert}_{\mathcal{X}}\lVert u^f \rVert_{\infty}$, we can show that $\overline{\lVert\mathcal{L}_{f_{cl}}\psi_{m-1}(x)\rVert}_{\mathcal{X}}$ is bounded.
    It then follows from the proof for the undisturbed system \eqref{eq:nominal_sys} that the set $\bar{\mathscr{C}}$ is forward invariant for the disturbed system \eqref{eq:disturbed_sys} under \eqref{eq:ho_safeguard}.
\end{proof}
}

{\xyr 
Two illustrative examples of $\mathcal{H}$ are $$\mathcal{H}(s) = \frac{1}{s}~\textnormal{and}~\mathcal{H}(s) = -\log\left(\frac{s}{1+s}\right),$$ whose derivatives are $\lvert \frac{\partial \mathcal{H}(s)}{\partial s}\rvert  =\frac{1}{s^2}$ and $\lvert \frac{\partial \mathcal{H}(s)}{\partial s}\rvert  =\frac{1}{s(s+1)}$, respectively.
Both functions clearly satisfy the condition in \eqref{eq:ho_safeguard_condition}.
}
In the following corollary, we show how the safeguard policy \eqref{eq:ho_safeguard} guarantees safety of both systems \eqref{eq:nominal_sys} and \eqref{eq:disturbed_sys} under {\xyr an existing} control policy.

\begin{corollary} \label{corol:us_kx}
        %
        {\xyr
        Let $\mathcal{H}: \mathbb{R} \rightarrow \mathbb{R}$ be a continuously differentiable function satisfying \eqref{eq:ho_safeguard_condition} and $k(x)$ be a locally Lipschitz control policy on $x\in \mathcal{X}$.
        If Assumptions \ref{a1}-\ref{a2} hold, $x_0 \in \bar{\mathscr{C}}$ and $B(x) = \mathcal{H}(\psi_{m-1}(x))$, then the control policy
        \begin{align*}
            u = k(x) + u^s(x),
        \end{align*}
        with $u^s$ defined in \eqref{eq:ho_safeguard},
        renders the set $\bar{\mathscr{C}}$ forward invariant for both undisturbed system \eqref{eq:nominal_sys} and disturbed system \eqref{eq:disturbed_sys}.
        }
\end{corollary}
{\xyr
\begin{proof}
    For the undisturbed system \eqref{eq:nominal_sys}, redefine $f_{cl}(x) = f(x)+g(x)k(x)$ and then the proof can be finished by showing the boundness of $\overline{\lVert f_{cl}(x) \rVert}_{\mathcal{X}}$.
    Since $k(x)$ is locally Lipschitz, $\lVert k(x)\rVert$ is bounded for all $x \in \mathcal{X}$, and further $\overline{\lVert f_{cl}(x)\rVert}_{\mathcal{X}}$ is bounded.
    Following the proof of Theorem \ref{thm:ho_safeguard_safety}, one can show that $\bar{\mathscr{C}}$ is forward invariant for the undisturbed system \eqref{eq:nominal_sys}.
    For the disturbed system \eqref{eq:disturbed_sys}, we redefine $f_{cl}(x) = f(x)+g(x)(u^f+k(x))$.
    Recalling $\lVert u^f \rVert_{\infty}\leq \eta$, $\lVert f_{cl}(x)\rVert$ is bounded on $\mathcal{X}$.
    It then follows from the proof of Theorem \ref{thm:ho_safeguard_safety} that the set $\bar{\mathscr{C}}$ is  forward invariant for the disturbed system \eqref{eq:disturbed_sys}.
\end{proof}
}


\subsection{Balance between safety and performance}

{\xyr According to Corollary \ref{corol:us_kx}}, the control law $$u=k^*(x)+u^s(x)$$ {\xyr can render the set $\bar{\mathscr{C}}$ forward invariant}. However, its performance ceases to be optimal, with the Hamiltonian being 
\begin{align}  \label{origin-H}
    H(x, \nabla V^*, k^*(x) + u^s(x)) = &\ \frac{d V^*}{dt} + l(x, k^*(x) + u^s(x))\notag \\
    = &\ K_s^2 \lVert \sqrt{R^{-1}}\mathcal{L}_g \mathcal{B} \rVert^2.
\end{align}
The above statement implies that performance and safety are often conflicting objectives. 
And the {\mg extent} of this conflict can be described by the following \textit{gradient similarity} measure }
\begin{align} \label{cos}
    \rho  = \cos(\theta(x))  = \frac{\langle \sqrt{R^{-1}}g^{\top} \nabla V^*, \sqrt{R^{-1}}\mathcal{L}_g \mathcal{B} \rangle}{\lVert \sqrt{R^{-1}}g^{\top} \nabla V^* \rVert \lVert \sqrt{R^{-1}}\mathcal{L}_g \mathcal{B} \rVert},
\end{align}
where $\theta\in [0,\pi]$ is the angle between $u^s$ and $k^*$, and $\rho \in [-1, 1]$ illustrates the degree of conflict between safety and performance.
A larger value of $\rho$ indicates less conflict between safety and performance, while a smaller one implies that safety significantly conflicts with performance.
%
%
The time derivative of $V^*$, i.e., 
\begin{align*}
    \dot{V}^* & = \nabla V^{*T} (f(x) + g(x)(k^*(x) + u^s(x))) \\
    & = -l(x,k^*(x)) - \rho\lVert \sqrt{R^{-1}}g^{\top} \nabla V^* \rVert \lVert \sqrt{R^{-1}}\mathcal{L}_g \mathcal{B} \rVert,
\end{align*}
{\hw and the Hamiltonian \eqref{origin-H}} {\mg indicate that the safeguard policy $u^s$ 
leads to a deterioration of the value function when $\rho < 0$ (i.e, $\theta \geq \frac{\pi}{2}$).}
%
Hence, we need to increase $\rho$ when $\theta \geq \frac{\pi}{2}$ to reduce the extent of conflict between $k^*$ and $u^s$. 
By gradient projection, we can divide the safeguard gradient into two parts, i.e.,
$$\sqrt{R^{-1}}{\mathcal{L}}_g \mathcal{B} = \sqrt{R^{-1}}\mathcal{L}_g^- \mathcal{B} + \sqrt{R^{-1}}\mathcal{L}_g^+ \mathcal{B},$$ with
\begin{align*}
    \sqrt{R^{-1}}\mathcal{L}_g^- \mathcal{B} & =\rho \frac{\lVert \sqrt{R^{-1}}\mathcal{L}_g \mathcal{B} \rVert}{\lVert \sqrt{R^{-1}}g^{\top} \nabla V^* \rVert} \sqrt{R^{-1}} g^{\top} \nabla V^* \notag\\
    \sqrt{R^{-1}}\mathcal{L}^+_g \mathcal{B} & =  \sqrt{R^{-1}}\mathcal{L}_g \mathcal{B} - \sqrt{R^{-1}}\mathcal{L}_g^- \mathcal{B} 
\end{align*}
where $\mathcal{L}_g^- \mathcal{B}$ is parallel {\mg to} the direction of the optimal control policy $k^*$ and thus affects the magnitude of $k^*$; and $\mathcal{L}_g^+ \mathcal{B}$ is perpendicular {\mg to} $k^*$, which can be regarded as a centripetal force to change the direction of $k^*$ (see Fig. \ref{fig-gradienthm:ho_safeguard_safety}).
To minimize \eqref{origin-H} while simultaneously taking safety into account, we introduce a parameter $\mu \in [0,1]$ to {\xyr modify} $u^s$  as
\begin{align} \label{eq:adaptive_safeguard_policy}
    u^s(x) = -K_s R^{-1} \tilde{\mathcal{L}}_g \mathcal{B}(x)
\end{align}
where $\sqrt{R^{-1}}\tilde{\mathcal{L}}_g \mathcal{B} =  (1 - \mu ) \sqrt{R^{-1}}\mathcal{L}_g^- \mathcal{B} + \sqrt{R^{-1}}\mathcal{L}_g^+ \mathcal{B}$ (see Fig. \ref{fig-gradient2}). 

%


\begin{figure}[htb]
    \centering
    \begin{subfigure}{0.49\linewidth} 
        \centering
        \includegraphics[width=1.00\textwidth,trim=0 220 0 180]{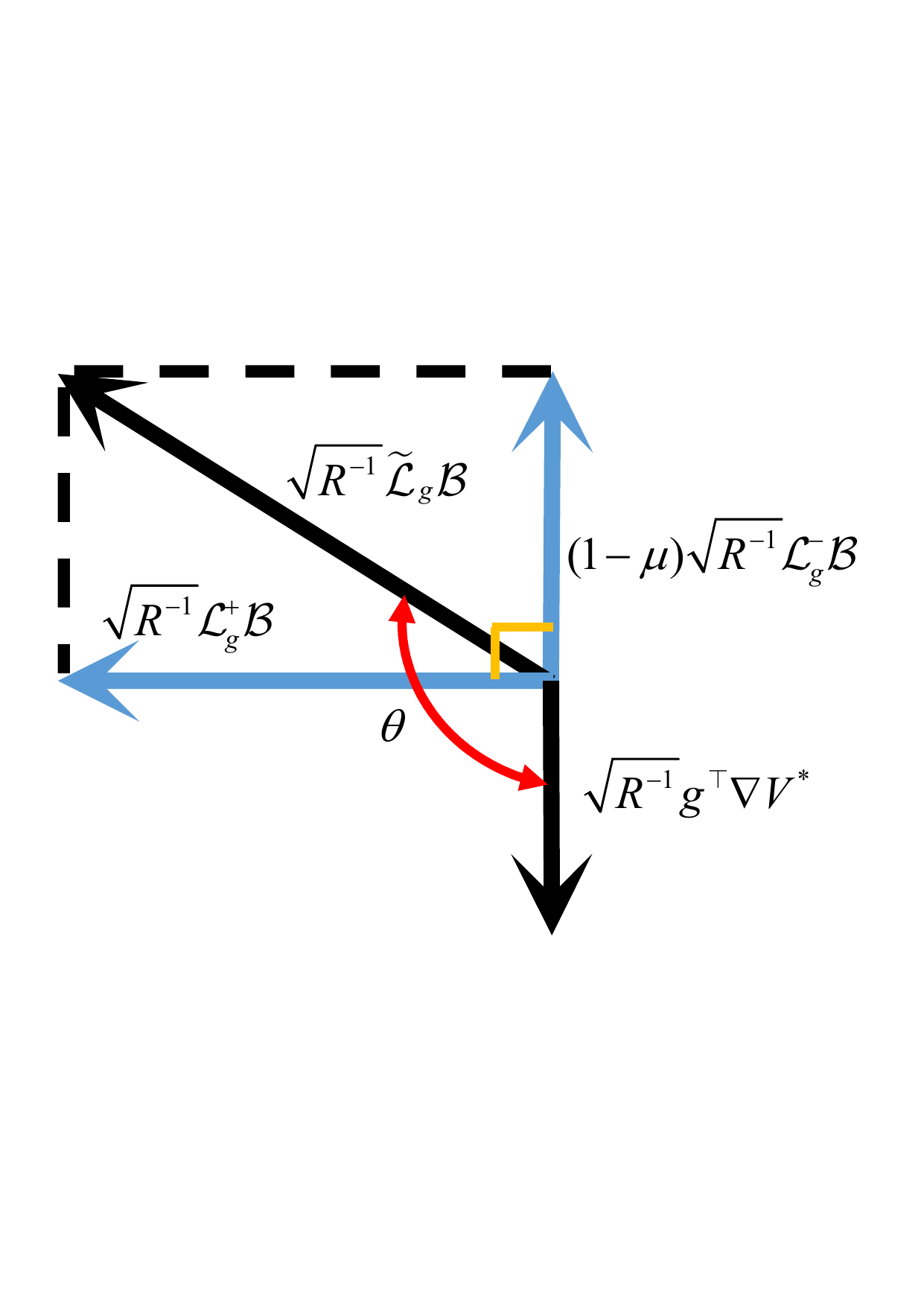} 
        \caption{$\mu = 0$}
        \label{fig-gradienthm:ho_safeguard_safety}
    \end{subfigure}
    \hfill 
    \begin{subfigure}{0.49\linewidth} 
        \centering
        \includegraphics[width=1.0\textwidth,trim=0 220 0 300]{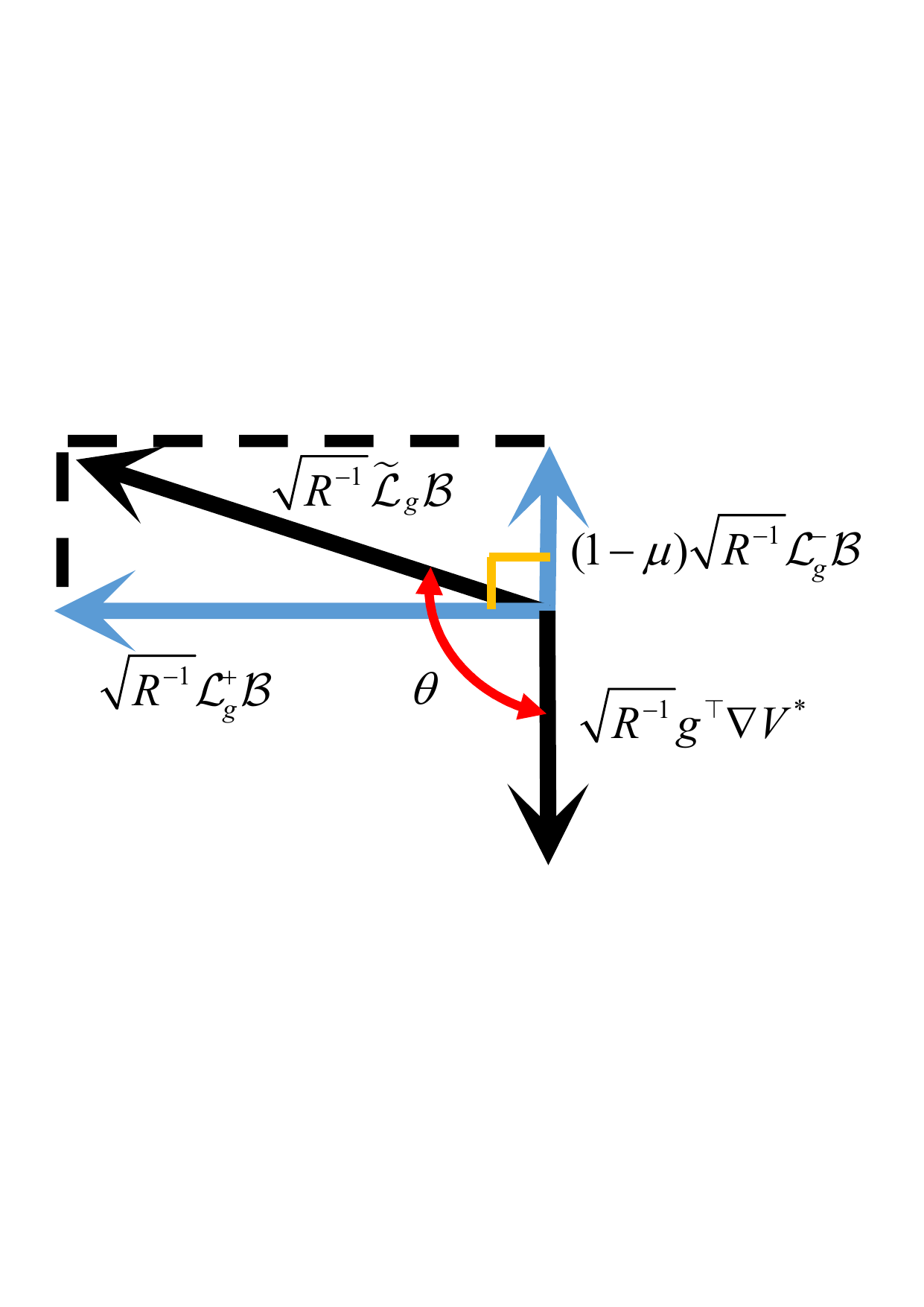} 
        \caption{$\mu = 0.5$}
        \label{fig-gradient2}
    \end{subfigure}
    \caption{Gradient manipulation on $\sqrt{R^{-1}}\tilde{\mathcal{L}}_g \mathcal{B}$.}
    \label{gradient manipulation}
\end{figure}

As shown in Fig.\ref{gradient manipulation}, the gradient similarity $\rho$ increases monotonically with $\mu$ when $\theta \geq \frac{\pi}{2}$, thus enhancing the performance. This is rigorously analyzed in the following lemma. 

%
\begin{lemma} \label{lemma-performance}
    Let $V^*$ be a continuously differentiable positive definite solution of \eqref{eq:HJB_without_u}.
    Then the following inequality holds 
    \begin{align} \label{inequality performance}
        \int_{0}^\infty l(x(\tau), k^*(x(\tau))-K_s R^{-1} \mathcal{L}_g \mathcal{B}(x(\tau))) d\tau  \notag\\
        \geq \int_{0}^\infty l(x(\tau), k^*(x(\tau))-K_s R^{-1} \tilde{\mathcal{L}}_g \mathcal{B}(x(\tau))) d\tau.
    \end{align}
\end{lemma}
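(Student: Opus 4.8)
The plan is to rewrite each infinite-horizon cost in \eqref{inequality performance} through the optimal value function $V^*$ and the Hamiltonian \eqref{Ham}, so that the common contribution cancels and the comparison collapses to a pointwise statement about the safeguarding term. Along the closed-loop trajectory of any admissible policy $u$ whose state tends to the origin, I would integrate the identity $l(x,u)=H(x,\nabla V^*,u)-(\nabla V^*)^\top(f+gu)=H(x,\nabla V^*,u)-\dot V^*$ and use $V^*(0)=0$ to obtain $\int_0^\infty l\,d\tau = V^*(x_0)+\int_0^\infty H(x,\nabla V^*,u)\,d\tau$. Applying this to both $u=k^*-K_sR^{-1}\mathcal{L}_g\mathcal{B}$ and $u=k^*-K_sR^{-1}\tilde{\mathcal{L}}_g\mathcal{B}$ eliminates the shared term $V^*(x_0)$ and reduces \eqref{inequality performance} to comparing the two Hamiltonian integrals.

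The key simplification is that the Hamiltonian of a safeguarded policy collapses to a pure quadratic in the safeguarding increment. Writing $u=k^*+v$ and subtracting the HJB equality $H(x,\nabla V^*,k^*)=0$, the terms linear in $v$ cancel because the optimality condition \eqref{e2.1.4} gives $(\nabla V^*)^\top g=-2(k^*)^\top R$; hence $H(x,\nabla V^*,k^*+v)=v^\top R v=\|\sqrt R\,v\|^2\ge0$, which is precisely the mechanism behind \eqref{origin-H}. Substituting $v=-K_sR^{-1}\mathcal{L}_g\mathcal{B}$ and $v=-K_sR^{-1}\tilde{\mathcal{L}}_g\mathcal{B}$ then turns the two Hamiltonian integrands into $K_s^2\|\sqrt{R^{-1}}\mathcal{L}_g\mathcal{B}\|^2$ and $K_s^2\|\sqrt{R^{-1}}\tilde{\mathcal{L}}_g\mathcal{B}\|^2$, respectively.

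It then remains to establish the pointwise inequality $\|\sqrt{R^{-1}}\tilde{\mathcal{L}}_g\mathcal{B}\|^2\le\|\sqrt{R^{-1}}\mathcal{L}_g\mathcal{B}\|^2$. Since, by construction, $\sqrt{R^{-1}}\mathcal{L}_g^-\mathcal{B}$ is parallel and $\sqrt{R^{-1}}\mathcal{L}_g^+\mathcal{B}$ orthogonal to $\sqrt{R^{-1}}g^\top\nabla V^*$, orthogonality yields $\|\sqrt{R^{-1}}\mathcal{L}_g\mathcal{B}\|^2=\|\sqrt{R^{-1}}\mathcal{L}_g^-\mathcal{B}\|^2+\|\sqrt{R^{-1}}\mathcal{L}_g^+\mathcal{B}\|^2$, whereas $\|\sqrt{R^{-1}}\tilde{\mathcal{L}}_g\mathcal{B}\|^2=(1-\mu)^2\|\sqrt{R^{-1}}\mathcal{L}_g^-\mathcal{B}\|^2+\|\sqrt{R^{-1}}\mathcal{L}_g^+\mathcal{B}\|^2$. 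Because $\mu\in[0,1]$ forces $(1-\mu)^2\le1$, the bound follows, and integrating it delivers \eqref{inequality performance}.

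The subtle point—the step I expect to demand the most care—is that the identity $\int_0^\infty l\,d\tau=V^*(x_0)+\int_0^\infty H\,d\tau$ is exact only along the trajectory generated by its own policy, so after the reduction the two costs are $V^*(x_0)$ plus the integrals of $K_s^2\|\sqrt{R^{-1}}\mathcal{L}_g\mathcal{B}\|^2$ and $K_s^2\|\sqrt{R^{-1}}\tilde{\mathcal{L}}_g\mathcal{B}\|^2$ taken along two \emph{different} closed-loop trajectories. A pointwise bound on the integrand does not by itself order the two integrals, and closing this gap is the heart of the argument. I would lean on the asymptotic stability asserted for the safeguarded scheme to guarantee $\lim_{\tau\to\infty}V^*(x(\tau))=0$ (so the boundary term vanishes and each $\int H\ge0$ is a well-defined nonnegative quantity), and then either adopt the common-trajectory reading suggested by the shared $x(\tau)$ in \eqref{inequality performance} or supply a trajectory-comparison argument. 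Confirming that the manipulated policy still steers the state to the origin, which underlies both the telescoping and any such comparison, is the crux.
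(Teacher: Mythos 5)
Your proposal is correct and follows essentially the same route as the paper: the paper likewise substitutes the manipulated policy into the Hamiltonian, uses the HJB cancellation to reduce it to the pure quadratic $K_s^2(1-2\rho^2\mu+\rho^2\mu^2)\lVert\sqrt{R^{-1}}\mathcal{L}_g\mathcal{B}\rVert^2$ --- exactly your orthogonal-decomposition computation, since $(1-\mu)^2\rho^2+(1-\rho^2)=1-2\rho^2\mu+\rho^2\mu^2$ --- compares it with \eqref{origin-H} via $2\mu-\mu^2\geq 0$, and then concludes \eqref{inequality performance} by citing Theorem 10.1-2 of \cite{lewis2012optimal}. The trajectory subtlety you flag as the crux is genuine but is not resolved in the paper either: the citation to \cite{lewis2012optimal} is invoked at precisely the point where the two pointwise Hamiltonian bounds, taken along two different closed-loop trajectories, would have to be converted into an ordering of the two costs, so your worry identifies a step the paper delegates to the cited theorem rather than a divergence from its argument.
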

\begin{proof}
    Substituting $u(x) =k^*(x) -K_s R^{-1} \tilde{\mathcal{L}}_g \mathcal{B}(x)$ into \eqref{Ham} {\xyr and using the HJB equation as in \eqref{eq:HJB}}, one has
    \begin{align}\label{eq:modified_Hamiltonian}
       &  H(x, \nabla V^*, k^*(x) -K_s R^{-1} \tilde{\mathcal{L}}_g \mathcal{B}(x))  \notag \\
    = &\ \nabla V^{*T} (f(x) + g(x) (k^*(x) -K_s R^{-1} \tilde{\mathcal{L}}_g \mathcal{B}(x)) ) + x^{\top} Q x  \notag \\
    & + (k^*(x) -K_s R^{-1} \tilde{\mathcal{L}}_g \mathcal{B}(x))^{\top} R(k^*(x) -K_s R^{-1} \tilde{\mathcal{L}}_g \mathcal{B}(x))\notag \\
    = &\ K_s^2  (1 - 2 \rho^2 \mu + \rho^2 \mu ^2) \lVert \sqrt{R^{-1}}\mathcal{L}_g \mathcal{B}(x) \rVert^2.
    \end{align}
    Recalling \eqref{origin-H}, {\xyr $\lvert \rho \rvert\leq1$} and the fact that $2 \mu - \mu ^2 \geq 0$ (since $\mu \in [0, 1)$), we have 
     \begin{align*}
     &H(x, \nabla V^*, k^*(x) -K_s R^{-1} \tilde{\mathcal{L}}_g \mathcal{B}(x))\\
     \leq&\ H(x, \nabla V^*, k^*(x) -K_s R^{-1} \mathcal{L}_g \mathcal{B}(x)).\end{align*}
    Then the inequality \eqref{inequality performance} holds by following the {\xyr similar development as in the} proof of Theorem 10.1-2 in \cite{lewis2012optimal}.
\end{proof}

\begin{remark}
{\xyr By \eqref{origin-H} and \eqref{eq:modified_Hamiltonian}}, the performance can be further improved by decreasing $K_s$.
{\xyr Although the safety guarantee under arbitrary $K_s > 0$ has been shown in Theorem \ref{thm:ho_safeguard_safety} for continuous-time system}, an unreasonably small $K_s$ can make the system unsafe {\xyr in practical applications, when the controller is usually implemented in discrete-time using a zero-order hold (ZOH).
As a result, the system under $u^s$ with a small $K_s$ can easily violate safety constraints during the sampling period if the control frequency at which the system measurements are fed back to recalculate the control input is not large enough.}
\end{remark}

A natural question is how small $K_s$ should be {\xyr in practical situations}. 
Before answering this question, {\mg we analyze how safety is affected by changes in $K_s$.} 
%
Suppose $\psi_{m-1}(0) = 0$ and then select an HO-RCBF satisfying $\lvert \partial \mathcal{B}/\partial\psi_{m-1} \rvert =  \alpha_B (\mathcal{B})$, (e.g., $\mathcal{B} = \frac{1}{\psi_{m-1}}$ and $\lvert \partial \mathcal{B}/\partial\psi_{m-1} \rvert= \mathcal{B}^2 $), where $\alpha_B$ is a class $\mathcal{K}$ function.
{\xyr 
Assume that $\lVert \mathcal{L}_g \psi_{m-1}(x)\rVert^2\geq \tilde{\psi}_c$ for a positive $\tilde{\psi}_c$.
Then the time derivative of $\mathcal{B}$ along \eqref{eq:nominal_sys} is}
\begin{align*}
    \dot{\mathcal{B}} \leq \alpha_B (\mathcal{B}) ( \overline{\lVert \mathcal{L}_f \psi_{m-1} \rVert}_{\mathcal{X}} - K_s\tilde{\psi}_c \bar{\sigma}(R)^{-1}  \alpha_B (\mathcal{B})).
\end{align*}
Then we can obtain the upper bound of $\mathcal{B}$ as 
$${\xyr \mathcal{B}(x(t)) \leq \max \left( \mathcal{B}(0),~\alpha_B^{-1} \left(K_s^{-1}\tilde{\psi}_c^{-1}\bar{\sigma}(R) \overline{\lVert \mathcal{L}_f \psi_{m-1} \rVert}_{\mathcal{X}}\right)\right),}$$
which implies that a small $K_s$ not only enhances performance but also allows a large value of $\mathcal{B}$.
The influence of $K_s$ on safety guarantee is demonstrated in the following example.

%
%

%
\begin{example} \label{example1}
Consider a mobile robot described by $\dot{p} = v$ and $\dot{v} = u$, where $p$ is the position, $v$ is the velocity, $u$ is the controller. The controller is designed as $u = k(p, p_d, v) + u^s(p, v)$,  where  $p_d$ is a desired position trajectory, $k = - v + 100(p_d -p)$ steers $p$ to $p_d$,  and $u^s$ is the safeguard policy.
{\xyr Consider a} velocity constraint described by ${\xyr h(t)= v_{max} - v(t) \geq 0}$, where $v_{max}$ is the maximum of speed.
We set the control frequency at $f_c = 100$ Hz, and the sampling time at $0.01\textnormal{s}$.
We then define the HO-RCBF as $B = \frac{1}{h(v)}$.
The velocity and acceleration trajectories of the robot under different $K_s$ are given in Fig. \ref{velocity under Ks}.
\begin{figure}[htbp] 
    \centering
    \includegraphics[width=0.5\textwidth, clip]{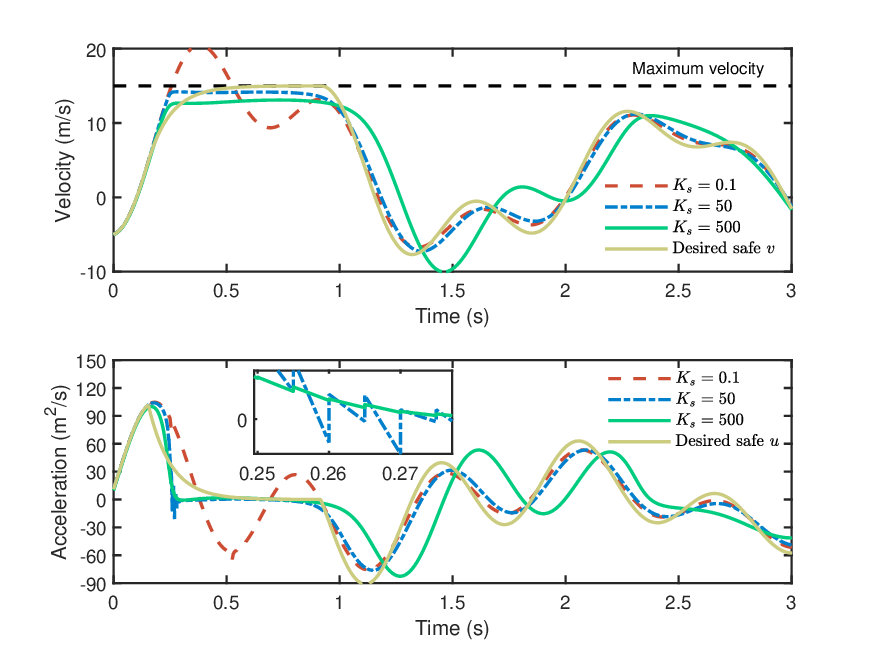}   
    \caption{Velocity and acceleration trajectory of the mobile robot under $K_s = 0.1,~50~\textnormal{and}~ 500$ when $v_{max} = 15 \textnormal{m/s}$. 
    %
    } \label{velocity under Ks}
\end{figure}
When $K_s = 0.1$, the {\xyr velocity} constraint is violated due to the limitation of control frequency.
Suppose $K_s$ is small enough such that $$h(t_k) = h_\epsilon\ \ \ \ \textnormal{and} \ \ \ \ \lVert u^s(t_k) \rVert \leq k(t_k) + \epsilon_u$$ for some $x$ at $t_k$, and $h_\epsilon, \epsilon_u > 0$.
Then one has 
\begin{align*}\dot{h}(t) \leq -k(t_k) + \lVert u^s(t_k) \rVert \leq -\epsilon_u < 0 , & & \forall t \in [t_k, t_{k+1} )\end{align*}
which indicates $h(t) <  0$ if $f_c < \frac{\epsilon_u}{h_\epsilon}$.
Moreover, when $K_s = 50$, the acceleration oscillates because the performance gradient and safety gradient switch rapidly when $v$ approaches $v_{max}$ (see \cite{gu2024balance} for more detailed discussion).
Although a larger $K_s$ can address this issue, it also decreases the tracking performance by imposing a more conservative value of $v$ (see, {\it e.g.}, the case when $K_s=500$).
\end{example}

Hence, a key question is how to design a suitable $K_s$ to guarantee the safety of the system while maintaining its performance. This motivates the following adaptive design approach. 
{\mg In practice, a large $K_{s,0} > 0$ is initially needed to guarantee safety. Then $K_s$ can be updated adaptively to gain better system performance. 
To balance safety and performance, $K_s$ should be decreased to enhance performance when $\mathcal{B}$ is within a safety range and increased to guarantee safety when the control frequency} 
is not large enough or the optimal control is large with a conflicting relationship $\rho<0$.
Here, we define an auxiliary dynamical system for $K_s$ of the form
\begin{align} \label{auxiliary dynamics}
    \dot{K}_s = F(x, K_s) + G(x, K_s) v
\end{align}
where $F : \mathbb{R}^{p+1} \rightarrow \mathbb{R}$ and $G : \mathbb{R}^{p+1} \rightarrow \mathbb{R}^{1 \times q}$ are {\mg chosen such that} $\lVert \mathcal{L}_G K_s\rVert \neq 0 \; \forall x \in \mathbb{R}^n, K_s \in \mathbb{R}$, and $v\in \mathbb{R}^q$ is the control input.
As illustrated in Theorem \ref{thm:ho_safeguard_safety}, a positive $K_s$ can guarantee the set $\bar{\mathscr{C}}$ forward invariant.
Then we can regard $K_s$ as a CBF, and define a constrained input set $\mathcal{U}_k$ for $v$:
\begin{align*}
   \mathcal{U}_k=\{v\in\mathbb{R}^q:~\mathcal{L}_F K_s + \mathcal{L}_G K_s v + \beta_K (K_s)\geq0\},
\end{align*}
where $\beta_K$ is a class $\mathcal{K}$ function {\xyr to allow the decrease of $K_s$.}
For simplicity, let $F = 0$ and $G = I$. Then we can let $$v =-\beta_K (K_s) + M(x, K_s),$$ where $M$ is a user-defined positive semi-definite function.

\subsection{Adaptive high-order safeguard policy}

Combining the gradient manipulation approach and the adaptive mechanism of $K_s$, we propose an adaptive high-order safeguard policy
\begin{align}
   u^s & = -K_s R^{-1} \left((1 - \mu ) \mathcal{L}_g^- \mathcal{B} + \mathcal{L}_g^+ \mathcal{B}\right)
    \label{e3.2.1a} \\
   \dot{K}_s & = \operatorname{Proj}_{\mathcal{S}} \{- \beta_K (K_s) + M(x, K_s) \} \label{e3.2.1b} 
\end{align}
with $K_s(0) = K_{s,0}$ and 
\begin{align*}
    \beta_K (K_s) = Y K_s^2,& &  M(x, K_s) = \gamma  e^{-h}l(x,k^*(x)),
\end{align*}
{\mg where $\gamma > 0$ is a weight used to increase $K_s$, ensuring safety; and $Y > 0$ is a bounded value used to decrease $K_s$ if performance improvements are required. Thus the design parameters $Y$ and $\gamma$ can be used to balance the system's safety and performance.}
%
The operator $\operatorname{Proj}_{\mathcal{S}} \{\cdot \}$ guarantees that $K_s$ stays in a compact set $\mathcal{S}$. 

In the following lemma, we show that the adaptive safeguard policy \eqref{e3.2.1a} and $\eqref{e3.2.1b}$ can render $\bar{\mathscr{C}}$ forward invariant.
\begin{lemma} \label{lm:adaptive_safeguard_safety}
    {\xyr Consider the function $h(x)$ with  least relative degree of $m$ and the sets $\mathscr{C}_i, i\in \{1,\cdots,m\}$.}
    If Assumptions \ref{a1}-\ref{a2} hold, $x_0 \in \bar{\mathscr{C}}$ {\xyr and $B(x) = \mathcal{H}(\psi_{m-1}(x))$ for a continuously differentiable function $\mathcal{H}: \mathbb{R} \rightarrow \mathbb{R}$ satisfying \eqref{eq:ho_safeguard_condition}}, then {\xyr the control policy $u=u^s(x,K_s)$ as defined in $\eqref{e3.2.1a}$ and $\eqref{e3.2.1b}$} renders the set $\bar{\mathscr{C}}$ forward invariant for both undisturbed system \eqref{eq:nominal_sys} and disturbed system \eqref{eq:disturbed_sys}.
\end{lemma}
\begin{proof}
{\xyr
    For the undisturbed system \eqref{eq:nominal_sys}, let $f_{cl} = f$.
    Then $\dot{\mathcal{B}}(x)$ along \eqref{eq:nominal_sys} under the adaptive safeguard policy \eqref{e3.2.1a} is 
    \begin{align*}
    \dot{\mathcal{B}}(x)=&  \frac{\partial \mathcal{B}(x)}{\partial \psi_{m-1}(x)}\Big(\mathcal{L}_{f_{cl}} \psi_{m-1}(x) - (1 - \rho^2 \mu )  K_s \frac{\partial\mathcal{B}(x)}{\partial \psi_{m-1}(x)} \\
   &   \times\mathcal{L}_g \psi_{m-1}(x) R^{-1} \mathcal{L}_g \psi_{m-1} ^{\top}(x) \Big)  \\
     \leq& \left\lvert\frac{\partial \mathcal{B}(x)}{\partial \psi_{m-1}(x)}\right\rvert^2 \bigg(\left\lvert\frac{\partial \mathcal{B}(x)}{\partial \psi_{m-1}(x)}\right\rvert^{-1}\overline{\lVert\mathcal{L}_{f_{cl}}\psi_{m-1}(x)\rVert}_{\mathcal{X}}\\
    & - (1 - \rho^2 \mu )K_s \bar{\sigma}(R)^{-1}  \big(\overline{\lVert \mathcal{L}_g \psi_{m-1}(x)\rVert}_{\mathcal{X}}\big)^2\bigg).
    \end{align*}
    If $(1-\rho^2\mu)K_s > 0$, we can show that $\psi_{m-1}>0$ by following a similar development of the proof as in Theorem \ref{thm:ho_safeguard_safety}.
    Since $\rho^2 \leq 1$ and $\mu < 1$, we have $(1-\rho^2\mu)K_s > 0$ only if $K_s > 0$.
    Then we need to show $K_s > 0$ when $x$ approaches $\partial \mathscr{C}_m$.

    Let $\bar{K}_s =  \mathop{\rm{sup}}_{K_s \in \mathcal{S}} \{K_s\}$. Since $K_s \leq \bar{K}_s$ and the forward invariance of the set $\bar{\mathscr{C}}$ naturally holds when $K_s = \bar{K}_s$ (see Theorem \ref{thm:ho_safeguard_safety}), we only need to focus on the case $K_s < \bar{K}_s$.
    For all $K_s < \bar{K}_s$, the time derivative of $K_s$ along \eqref{e3.2.1b} satisfies
    \begin{align}\label{eq:K_s_dot_lemma}
        \dot{K}_s \geq  
        \gamma e^{-\overline{\lVert h(x)\rVert}_{\mathcal{X}}} \underline{\sigma}(Q) \lVert x\rVert^2- Y K_s^2,~~K_s(0)>0.
    \end{align}
    Using the comparison Lemma, we conclude that $K_s(t) \geq 0$ for all $t\geq 0$.
    Since the term $- Y K_s^2$ decreases monotonically with $K_s$, one has \begin{align}\label{eq:K_s_geq0}
        \dot{K}_s > 0,~~~\forall K_s \in\bigg[0, \sqrt{\frac{\gamma}{Y} e^{-\overline{\lVert h(x)\rVert}_{\mathcal{X}}} \underline{\sigma}(Q)} \lVert x\rVert\bigg).
    \end{align}
    This implies that $K_s > 0$ when $x\in \mathcal{X}\backslash B_r$, where $B_r = \{x\in\mathbb{R}^n|~\lVert x\rVert \leq r\}$ for some $r>0$.
    To see this, suppose by contradiction that there exists $t^*$ such that $x(t^*)\in \mathcal{X}\backslash B_r$ and $K_s(t^*) = 0$.
    Since $K_s$ and $x$ are continuous, there exists a small $\delta > 0$ such that for all $t\in (t^*-\delta, t^*)$, $\lVert x(t) \rVert \geq r$ and $K_s(t)\leq r\sqrt{\frac{\gamma}{2Y} e^{-\overline{\lVert h(x)\rVert}_{\mathcal{X}}} \underline{\sigma}(Q)}$.
    Then for all such $t$, $\dot{K}_s(t) \geq \frac{\gamma}{2} e^{-\overline{\lVert h(x)\rVert}_{\mathcal{X}}} \underline{\sigma}(Q) r^2 > 0$ by \eqref{eq:K_s_dot_lemma}.
    Hence, $K_s(t)$ must be strictly increasing on $(t^*-\delta, t^*)$, which implies $K_s(t)< 0$ when $t \in (t^*-\delta, t^*)$.
    This contradicts with the fact that $K_s(t)\geq 0$.
    Therefore, $K_s>0$ for all $x \in \mathcal{X}\backslash B_r$.

    By Assumption \ref{a2}, the neighborhood $N_l$ of the origin satisfies $N_l \subset \operatorname{Int}(\mathscr{C}_m)$.
    Let $r$ be small enough such that $B_r \subset N_l$.
    %
    %
    We first show that $\psi_{m-1}(x)> 0$ for all $x \in \mathcal{X} \backslash B_r$.
    %
    %
    By Theorem \ref{thm:ho_safeguard_safety}, $K_s>0$ guarantees $\psi_{m-1}(x)>0$, and thus $\psi_{m-1}(x)>0$ for all trajectories on $\mathcal{X}\backslash B_r$.
    Regardless of the value of $K_s$, $\psi_{m-1}(x)>0$ is automatically satisfied for all $x \in B_r$ because $B_r \subset \operatorname{Int}(\mathscr{C}_m)$.

    Now we have shown that $\psi_{m-1}(x)> 0$ for all $x \in \mathcal{X}$.
    It then follows from the proof of Lemma \ref{lm1} that the set $\bar{\mathscr{C}}$ is forward invariant for undisturbed system \eqref{eq:nominal_sys}.
    By redefining $f_{cl} = f+gu^f$, the proof for disturbed system \eqref{eq:disturbed_sys} follows the similar development as that for undisturbed system \eqref{eq:nominal_sys}.
    }
\end{proof}

The following corollary shows that the adaptive safeguard control policy guarantees the safety of system \eqref{eq:disturbed_sys} and its undisturbed system \eqref{eq:nominal_sys} under an existing control policy.

     \begin{corollary} \label{c2}
        {\xyr
        Let $\mathcal{H}: \mathbb{R} \rightarrow \mathbb{R}$ be a continuously differentiable function satisfying \eqref{eq:ho_safeguard_condition} and} $k(x)$ be a locally Lipschitz control policy on $x \in \mathcal{X}$.
        If Assumptions \ref{a1}-\ref{a2} hold, $x_0 \in \bar{\mathscr{C}}$ {\xyr and $B(x) = \mathcal{H}(\psi_{m-1}(x))$, then the controller
        \begin{align} \label{eq:control_policy_safeguard_policy}
            u = k(x)+u^s(x,K_s),
        \end{align}
        where $u^s$ is defined in $\eqref{e3.2.1a}$ and $K_s$ is updated by $\eqref{e3.2.1b}$,
        renders $\bar{\mathscr{C}}$ forward invariant for both system \eqref{eq:nominal_sys} and system \eqref{eq:disturbed_sys}.
        }
        %
        %
    \end{corollary}
    \begin{proof}
        Invoking Lemma \ref{lm:adaptive_safeguard_safety}, we can proceed following the proof of Corollary \ref{corol:us_kx} {\xyr by regarding $g(x)k(x)$ as a bounded disturbance input}. 
    \end{proof}
%

{\xyr
Corollary \ref{c2} shows that the adaptive safeguard policy can be used to modify an existing controller to guarantee safety.}
Suppose that the state trajectories of \eqref{eq:nominal_sys} under the adaptive safeguard policy \eqref{e3.2.1a} and the optimal control policy \eqref{eq:optimal_control_policy} always stay in the admissible safety risk set $\mathscr{B}= \{x \in \mathcal{X}|~  \lVert \nabla \mathcal{B}(x) \rVert^2 \leq \bar{\mathcal{W}}\}$, where $\bar{\mathcal{W}}$ is a positive constant. 
Now we shall provide a sufficient condition to guarantee the stability of the closed-loop system.

\setlength{\itemindent}{0pt}  
\setlength{\leftskip}{0pt}    

\begin{thm} \label{t2}
    Let $V^*$ be a continuously differentiable positive definite solution of \eqref{eq:HJB_without_u} and $\mathscr{B}$ be the admissible safety risk set.
    Let $\rho$ be defined by \eqref{cos}.
    Suppose $x_0 \in {\xyr \bar{\mathscr{C}}}$, Assumption \ref{a2} holds, {\xyr $B(x)$ satisfies the condition in Corollary \ref{c2},} and the optimal control policy \eqref{eq:optimal_control_policy} is locally Lipschitz.
    {\xyr 
    Apply the controller \eqref{eq:control_policy_safeguard_policy} with $k(x)=k^*(x)$ to  the undisturbed system \eqref{eq:nominal_sys}.}
    Then the equilibrium point of the closed-loop system is {\mg asymptotically stable } 
    under either of the following conditions
    \begin{itemize}
        \item[(C1)]  \( 0< \gamma < 1 \), and \( \rho(x) \geq 0, ~  \forall x \in \mathscr{B}\);
        \item[(C2)] $0<\gamma < 1$, $0 \leq \mu < 1$, $Y > \frac{(1-\mu)^2}{1 - \gamma} \frac{\bar{\mathcal{W}} {\xyr \overline{\lVert g(x)\rVert}_{\mathcal{X}}^2}}{\underline{\sigma}(R)}$, and \( \rho(x) < 0 \) for some $x \in \mathscr{B}$.
    \end{itemize}
\end{thm}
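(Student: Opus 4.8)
The plan is to prove asymptotic stability with a single \emph{composite} Lyapunov function on the augmented state $(x,K_s)$, namely $\mathcal{V}(x,K_s)=V^*(x)+K_s$, which is positive definite because $V^*$ is positive definite in $x$ and $K_s\geq 0$ is kept nonnegative by $\operatorname{Proj}_{\mathcal{S}}$. I would then differentiate along the closed loop, $\dot{\mathcal{V}}=\dot V^*+\dot K_s$, and show it is negative definite under each of (C1) and (C2). The reason a composite function is needed (rather than $V^*$ alone) is that in case (C2) the destabilizing safeguard term can be positive near the origin and can only be dominated by exploiting the adaptive decay of $K_s$.

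First I would compute $\dot V^*$ under $u=k^*+u^s$. The HJB identity $\nabla V^{*\top}(f+gk^*)=-l(x,k^*)$ eliminates the drift, leaving only the cross term $\nabla V^{*\top}g\,u^s$. Substituting the gradient-manipulated safeguard $u^s=-K_sR^{-1}\tilde{\mathcal{L}}_g\mathcal{B}$ and reusing the orthogonal decomposition behind Lemma \ref{lemma-performance} (so that $\sqrt{R^{-1}}\tilde{\mathcal{L}}_g\mathcal{B}=\sqrt{R^{-1}}\mathcal{L}_g\mathcal{B}-\mu\rho\tfrac{\lVert\sqrt{R^{-1}}\mathcal{L}_g\mathcal{B}\rVert}{\lVert\sqrt{R^{-1}}g^{\top}\nabla V^*\rVert}\sqrt{R^{-1}}g^{\top}\nabla V^*$), the cross term collapses to a single scalar and
\[
\dot V^*=-l(x,k^*)-K_s(1-\mu)\rho\,\lVert\sqrt{R^{-1}}g^{\top}\nabla V^*\rVert\,\lVert\sqrt{R^{-1}}\mathcal{L}_g\mathcal{B}\rVert .
\]
For the gain, on the admissible safety risk set $\mathscr{B}$ (strictly interior, where $\mathcal{B}$ and hence $h$ stay bounded) the projection is inactive, so $\dot K_s=-YK_s^2+\gamma e^{-h}l(x,k^*)$, which I would bound using $e^{-h}\leq 1$.

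For (C1), $\rho\geq 0$ renders the cross term nonpositive, giving $\dot{\mathcal{V}}\leq -(1-\gamma)l(x,k^*)-YK_s^2$, negative definite precisely because $\gamma<1$. For (C2) I would use the worst-case estimates $\lvert\rho\rvert\leq 1$, the identity $\lVert\sqrt{R^{-1}}g^{\top}\nabla V^*\rVert=2\sqrt{k^{*\top}Rk^*}\leq 2\sqrt{l}$, and the key risk-set bound $\lVert\sqrt{R^{-1}}\mathcal{L}_g\mathcal{B}\rVert=\lVert\sqrt{R^{-1}}g^{\top}\nabla_x\mathcal{B}\rVert\leq b_{\max}:=\bar g\sqrt{W}/\sqrt{\underline{\sigma}(R)}$ (valid on $\mathscr{B}$ by Assumption \ref{a2}). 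These reduce $\dot{\mathcal{V}}$ to the negative of the quadratic form $(1-\gamma)l-2(1-\mu)b_{\max}\sqrt{l}\,K_s+YK_s^2$ in the variables $(\sqrt{l},K_s)$. This form is positive definite exactly when $1-\gamma>0$ and $(1-\gamma)Y-(1-\mu)^2 b_{\max}^2>0$, i.e. $Y>\frac{(1-\mu)^2}{1-\gamma}\frac{W\bar g^2}{\underline{\sigma}(R)}$, which is precisely the stated threshold; since the form vanishes only at $l=0,\,K_s=0$ (and $l=0\Rightarrow x=0$ as $Q>0$), $\dot{\mathcal{V}}$ is negative definite and asymptotic stability follows from the standard Lyapunov theorem \cite{khalil2002nonlinear}.

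The main obstacle is the rigorous handling of the projection operator: the clean identity $\dot K_s=-YK_s^2+\gamma e^{-h}l$ holds only in the interior of $\mathcal{S}$, so I would argue that on $\mathscr{B}$ the gain is driven toward zero (since $M(x,k^*)=\gamma e^{-h}l$ is small away from the boundary) and the saturation is inactive near the equilibrium, keeping $\dot K_s\leq -YK_s^2+\gamma e^{-h}l$ valid for the estimate. The remaining technical point is the chain-rule identity $\mathcal{L}_g\mathcal{B}^{\top}=g^{\top}\nabla_x\mathcal{B}$, which is what converts the CBF Lie-derivative norm into the geometric bound $b_{\max}$ and thereby couples the risk-set radius $W$ to the gain condition on $Y$.
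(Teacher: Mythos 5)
Your proposal is correct and follows essentially the same route as the paper: the composite Lyapunov function $V^*(x)+K_s$, the HJB identity reducing $\dot V^*$ to the $\rho$-weighted cross term $-K_s(1-\mu)\rho\,\lVert\sqrt{R^{-1}}g^{\top}\nabla V^*\rVert\,\lVert\sqrt{R^{-1}}\mathcal{L}_g\mathcal{B}\rVert$, the projection/$e^{-h}\leq 1$ bound $\dot K_s\leq -YK_s^2+\gamma\, l(x,k^*)$, the sign argument for (C1), and the risk-set bound $\lVert\sqrt{R^{-1}}g^{\top}\nabla_x\mathcal{B}\rVert^2\leq W\bar g^2/\underline{\sigma}(R)$ yielding exactly the stated threshold on $Y$ for (C2). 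The only (harmless) variation is cosmetic: you bound $\lVert\sqrt{R^{-1}}g^{\top}\nabla V^*\rVert\leq 2\sqrt{l}$ and check positive definiteness of a $2\times 2$ quadratic form in $(\sqrt{l},K_s)$, concluding by the standard Lyapunov theorem, whereas the paper completes the square in $K_s$ only, obtains the semidefinite bound $\dot L_x\leq -(1-\gamma)x^{\top}Qx$, and invokes LaSalle's invariance principle.
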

\begin{proof}
    Consider the Lyapunov function candidate $L_x(x, K_s) = V^*(x) + K_s$.
    %
    {\xyr Under the control policy \eqref{eq:control_policy_safeguard_policy} with $k(x)=k^*(x)$,}
    the time derivative of $V^*(x)$ along \eqref{eq:nominal_sys} becomes
    \begin{align} \label{e3.2.1}
        \dot{V}^* = \nabla V^*(x)^{\top} \left(f(x) + g(x)(u^s(x,K_s) + k^*(x))\right).
    \end{align}
    {\xyr
    By \eqref{cos} and the definition of $\mathcal{L}_g^+\mathcal{B}$, one can show 
    $\nabla V^{*\top} gR^{-1}\mathcal{L}_g^+\mathcal{B}=0$. 
    Recalling the fact $\mathcal{L}_g\mathcal{B} = \mathcal{L}_g^+\mathcal{B}+\mathcal{L}_g^-\mathcal{B}$, one has $\nabla V^{*\top} gu^s = -(1-\mu) K_s \nabla V^{*\top} gR^{-1}\mathcal{L}_g\mathcal{B}$.
    Further considering \eqref{cos}, we have
    \begin{align}\label{eq:nabla_V_g_us}
    \nabla V^{*\top} gu^s=-(1-\mu) K_s \rho\big \lVert \sqrt{R^{-1}}g^{\top} \nabla V^*\big \rVert \big\lVert \sqrt{R^{-1}}g^\top \nabla \mathcal{B} \big \rVert.
    \end{align}}
    Combining {\xyr \eqref{eq:HJB}}, \eqref{e3.2.1} and {\xyr \eqref{eq:nabla_V_g_us}} yields
    \begin{align*}
        \dot{V}^* = & -x^{\top} Q x - {\xyr\frac{1}{4} \big\lVert \sqrt{R^{-1}}g^{\top} \nabla V^* \big\rVert^2} \\
        & -{\xyr (1-\mu) K_s \rho \big\lVert \sqrt{R^{-1}}g^{\top} \nabla V^* \big\rVert \big\lVert \sqrt{R^{-1}} g^\top \nabla \mathcal{B} \big\rVert}.
    \end{align*}
    According to Theorem 3.10.1 in \cite{ioannou2006adaptive}, the time derivative of $L_x$ along \eqref{eq:nominal_sys} and \eqref{e3.2.1b} is 
    \begin{align*}
         \dot{L}_x  \leq & -(1-\gamma) \left(x^{\top} Q x + \frac{1}{4} \big\lVert \sqrt{R^{-1}}g^{\top} \nabla V^* \big\rVert^2 \right)-YK_s^2\\
         &  -{\xyr  (1-\mu) K_s \rho(x) \big\lVert \sqrt{R^{-1}}g^{\top} \nabla V^* \big\rVert \big\lVert \sqrt{R^{-1}}g^\top \nabla \mathcal{B} \big\rVert}.
    \end{align*}

    \begin{itemize}
        \item[(1)] 
        Under condition C1 and the fact that $K_s(t) \geq 0$ by Lemma \ref{lm:adaptive_safeguard_safety}, the time derivative of $V^*$ is  
        \begin{align*}
            \dot{V}^* \leq  -x^{\top} Q x - \frac{1}{4} \big\lVert \sqrt{R^{-1}}g^{\top} \nabla V^* \big\rVert^2,
        \end{align*}
        and $\dot{L}_x$ is given as
        \begin{align*}
            \dot{L}_x \leq  -(1- \gamma) \left(x^{\top} Q x + \frac{1}{4} \big\lVert \sqrt{R^{-1}}g^{\top} \nabla V^* \big\rVert^2 \right)- YK_s^2.
        \end{align*}
        Hence, the Lyapunov stability condition is always satisfied when $\rho \geq 0$.
        \item[(2)]
        Under condition C2, if a constant $K_s$ is adopted, the stability of the closed-loop system might be destroyed due to the conflict between $u^s$ and $k^*$, i.e., $\rho <0$.
        However, the stability condition of $L_x$ can still be guaranteed by the adaptive law \eqref{e3.2.1a}, \eqref{e3.2.1b}.
        The time derivative of $L_x$ is
        \begin{align*}
            \dot{L}_x \leq&  -(1 - \gamma) \left( x^{\top} Q x + \frac{1}{4} \big\lVert \sqrt{R^{-1}}g^{\top} \nabla V^* \big\rVert^2 \right) - YK_s^2   \\
            & -(1 - \mu) \rho\big \lVert \sqrt{R^{-1}}g^{\top} \nabla V^* \big\rVert\big \lVert \sqrt{R^{-1}}g^{\top} \nabla \mathcal{B} \big\rVert K_s.
        \end{align*}
        Completing the squares and considering the fact that $\lVert \rho \rVert \leq 1$, one further has 
        \begin{align*}
            \dot{L}_x \leq & -(1 - \gamma) \left(x^{\top} Q x  + \frac{\big\lVert \sqrt{R^{-1}}g^{\top} \nabla V^* \big\rVert^2}{4Y}\big( Y \right.\\
             & \left. -  \frac{(1 -\mu)^2}{1-\gamma}\big\lVert \sqrt{R^{-1}}g^{\top} \nabla \mathcal{B} \big\rVert^2 \big)\right).
        \end{align*}
        {\xyr 
        Combining $\lVert g R g^\top \rVert \leq  \underline{\sigma}(R)^{-1} \overline{\lVert g(x)\rVert}_{\mathcal{X}}^2,$
        $(1-\gamma)Y > (1-\mu)^2\frac{\bar{\mathcal{W}} \overline{\lVert g(x)\rVert}_{\mathcal{X}}^2}{\underline{\sigma}(R)}$} and $\gamma < 1$, we have $\dot{L}_x \leq -(1-\gamma)x^{\top} Q x$, for any $x\in \mathscr{B}$.
    \end{itemize}
    %
    %
    Therefore, under both conditions C1 and C2, one has $L_x(t) \leq L_x(0), \forall t \geq 0$.
    According to LaSalle’s invariance principle, one can conclude that ${\xywr \lim}_{t\rightarrow \infty} x(t) = 0$.
    Hence, asymptotic stability of the system \eqref{eq:nominal_sys} {\mg is achieved} under {\xyr the control policy \eqref{eq:control_policy_safeguard_policy} with $k(x)=k^*(x)$}.
\end{proof}

\section{Implementation of robust safe RL} \label{Sec-PF}
This section develops a robust safe RL framework for learning $V^*(x)$ and $k^*(x)$ {\xyr based on the nominal model \eqref{eq:nominal_sys}},  while guaranteeing safety and robustness.
%

\subsection{Disturbance/fault observer design}
To compensate for unknown disturbance/fault $u^f$, the following nonlinear disturbance/fault observer is adopted \footnote{{\xyr This observer has been widely used in the robust control literature \cite{ersin2025safety, xie2021disturbance}, and its detailed derivation is omitted for brevity.}}
\begin{align} \label{eq:disturbance_observer}
&\hat{u}^f(t)  =  z(t) + \omega(x) \nonumber, \\
&\dot{z}(t) = -L(x)\left(f(x)+g(x)\left( u(t) + z(t) + \omega(x) \right) \right),
\end{align}
where $\hat{u}^f$ is the estimation of $u^f$, $z$ is an auxiliary vector, $L(x) = (\partial \omega(x) / \partial x)^{\top}$ and $\omega(x)$ is a user-defined function.

Let the estimation error be $\Tilde{u}^f = u^f - \hat{u}^f$.
{\xyr Then}
\begin{align*}
    \dot{\Tilde{u}}^f = -L(x) g(x) \Tilde{u}^f + \dot{u}^f.
\end{align*}
%
%
{\mg
An appropriate $\omega(x)$ is chosen} such that 
{\xyr $L(x) g(x) \geq K_f I$ for a $K_f>0$.
\footnote{{\xyr If $g(x)$ is full column rank, then one can select $\omega$ by solving $\frac{\partial \omega(x)}{\partial x}=K_fg(x)^{\dagger}$, where $g(x)^{\dagger}$ denotes the left inverse of $g(x)$.}}
} 
{\xyr Consider a Lyapunov function $L_u(\tilde{u}^f) = \frac{1}{2} \tilde{u}^{f\top}\tilde{u}^f$, whose time derivative satisfies
\begin{align} \label{eq:do_stable_condition}
    \dot{L}_u(\tilde{u}^f) \leq -K_f \lVert \tilde{u}^f \rVert^2 + \lVert \dot{u}^f(t) \rVert_\infty \lVert \tilde{u}^f \rVert.
\end{align}
%
Following a similar development as in Lemma 1 of \cite{ersin2025safety}, the estimation error is bounded as
\begin{align}\label{eq:til_u_bound}
    \lVert \tilde{u}^f(t) \rVert\leq e^{-K_ft} \lVert \tilde{u}^f(0) \rVert + \frac{1}{K_f}\lVert \dot{u}^f(t)\rVert_\infty \leq \bar{\tilde{u}}^f,
\end{align}
}
where $\bar{\tilde{u}}^f = {\xyr \lVert \dot{u}^f (t)\rVert_\infty}/{K_f} +\lVert \tilde{u}^f{\xyr(0)}\rVert$.

{\xyr 
The following corollary shows that Assumption \ref{a1} can be relaxed to unbounded disturbance/fault by using observer \eqref{eq:disturbance_observer}, if the disturbance satisfies
    \begin{equation} \label{uf-relaxed}
        \lVert \dot{u}^f(t) \rVert_\infty \leq \tilde{\eta},~~ \lVert u^f(0)\rVert < \infty,
    \end{equation}
where $\tilde{\eta}$ is an unknown positive constant.

\begin{corollary} \label{corol:us_uf}
    Consider the observer \eqref{eq:disturbance_observer} and the safeguard policy \eqref{eq:ho_safeguard}.
    Let $\mathcal{H}:\mathbb{R} \rightarrow \mathbb{R}$ be a continuously differentiable function satisfying \eqref{eq:ho_safeguard_condition}.
    Suppose Assumption \ref{a2} and the condition \eqref{eq:do_stable_condition} hold. 
    If $x_0 \in \bar{\mathscr{C}}$,  $B(x) = \mathcal{H}(\psi_{m-1}(x))$ and $u^f$ satisfies \eqref{uf-relaxed}, then $u = u^s(x) - \hat{u}^f$ renders the set $\bar{\mathscr{C}}$ forward invariant for  system \eqref{eq:disturbed_sys}.
\end{corollary}
\begin{proof}
    The closed-loop system is
    \begin{align} \label{eq:cls_us_uf}
        \dot{x} = f(x)+g(x)u^s(x) + g(x)\tilde{u}^f.
    \end{align}
    Since \eqref{eq:do_stable_condition} and \eqref{uf-relaxed} hold, the bound $\bar{\tilde{u}}^f$ is finite.
    Redefining $f_{cl}(x,t) = f(x)+ g(x)\tilde{u}^f$, one has $$\lVert f_{cl}(x,t) \rVert\leq \overline{\lVert f(x) \rVert}_{\mathcal{X}} + \overline{\lVert g(x) \rVert}_{\mathcal{X}}\bar{\tilde{u}}^f<\infty.$$
    It then follows a similar development as in the proof of Theorem \ref{thm:ho_safeguard_safety} that $\bar{\mathscr{C}}$ is forward invariant for system \eqref{eq:cls_us_uf}.
\end{proof}

Following a similar analysis in the proof of Corollary \ref{corol:us_uf}, the results in Corollary \ref{corol:us_kx} and Lemma \ref{lm4} can be also extended to unbounded disturbance/fault $u^f$ satisfying \eqref{uf-relaxed} by using $\hat{u}^f$ for disturbance/fault compensation.

\begin{remark} 
    The continuous differentiability of $u^f$ is required for observer design.
    This property \eqref{eq:til_u_bound} ensures that $u^f$ can be estimated and compensated, thereby enhancing the closed-loop performance, and, as shown in \cite{ersin2025safety}, leading to a less conservative CBF constraint.
    However, it is important to note that forward invariance of the set $\bar{\mathscr{C}}$ can still be preserved as long as $u^f$ remains bounded, even if it is discontinuous (see Theorem \ref{thm:ho_safeguard_safety}). 
    In such non-differentiable cases, input-to-state stability can still be  guaranteed; however, 
    the control performance may be degraded. 
    Specifically, 
    without disturbance compensation, by applying Theorem 8.3 in \cite{cohen2023adaptive}, we can show that $\lim\sup_{t\rightarrow\infty} \lVert x(t)\rVert =\breve{\alpha}(\lVert u^f(t)\rVert_\infty)$ for a class $\mathcal{K}$ function $\breve{\alpha}$; with disturbance compensation, we have $\lim\sup_{t\rightarrow\infty} \lVert x(t)\rVert =\breve{\alpha}\left(\frac{\lVert \dot{u}^f(t)\rVert_\infty}{K_f}\right)$ by \eqref{eq:til_u_bound}.
    Since $\breve{\alpha}\left(\frac{\lVert \dot{u}^f(t)\rVert_\infty}{K_f}\right)$ can be made arbitrary small by selecting a large $K_f$, the state trajectory with disturbance compensation will converge to a much smaller value than that without disturbance compensation. 
\end{remark}
}

\subsection{Actor-critic neural networks}
Over the compact set $\mathcal{X}$, the optimal value function $V^*(x)$ can be  approximated as
 $$  V^*(x) = W^{\top} \phi(x) + \varepsilon(x),$$
where $W \in \mathbb{R}^s$ is the ideal weight, $\phi:\mathbb{R}^n \rightarrow \mathbb{R}^s$ is the activation function with $\phi(0) = 0$ and $\nabla \phi(0) = 0$, and $\varepsilon: \mathbb{R}^n \rightarrow \mathbb{R}$ is the approximation error.
By the universal function approximation property \cite{FL99BookNN}, the approximation error $\varepsilon(x)$ can be made arbitrarily small and bounded by a given positive constant $\bar{\varepsilon}$. 

The derivative of $V^*(x)$ along the state trajectory is
$$\nabla V^*(x) = \nabla \phi(x)^{\top} W + \nabla \varepsilon(x),$$
where $\nabla \varepsilon(x)$ is bounded by  some positive constant $\bar{\varepsilon}_{\nabla}$ over the compact set $\mathcal{X}$. 
Since the ideal weight $W$ is unknown, a critic NN \eqref{e4.2.1} and an actor NN \eqref{e4.2.2} are used to approximate the optimal value function and the optimal control policy
\begin{align}
     \hat{V}(x) &= \hat{W}_c^{\top} \phi(x),   \label{e4.2.1} \\
     K(x, t) &= -\frac{1}{2}R^{-1} g(x)^{\top} \nabla \phi(x)^{\top} \hat{W}_a(t) \label{e4.2.2},
\end{align}
where $\hat{W}_c$ is the critic NN weight and $\hat{W}_a$ is the actor NN weight.

\subsection{ Model-based safe RL via simulation of experience}
This section develops a safe robust RL framework for learning the optimal value function and optimal control policy based on simulation of experience.
The behavior policy is defined as 
\begin{align} \label{e4.3.1}
    u = K(x, t) + u^s(x, K_s) - \hat{u}^f,
\end{align}
where the first term is the target policy, the second term is the safeguard policy, and the last term is the estimated disturbance/fault.
{\xyr
A performance metric for learning the ideal NN weights is the Bellman error, which is defined as $\delta= H(x,\nabla \hat{V}, u) - H(x,\nabla V^*,k^*)$.
By \eqref{eq:HJB}, the Bellman error is}
\begin{align} \label{e4.3.2}
\delta=\hat{W}_c^{\top} \varphi + l(x, u),
\end{align}
where $\varphi = \nabla \phi(x) (f(x) + g(x)u)$ and $\nabla \phi = \partial \phi/ \partial x$.
Given that online learning is actually off-policy since the behavior policy \eqref{e4.3.1} is different from $K$, the collected data may result in an approximation error in the optimal control policy \eqref{eq:optimal_control_policy}.
Hence, the undisturbed system \eqref{eq:nominal_sys} is utilized to generate simulation data to correct the approximation error and relax the PE condition as well.
Denote $\{x_i(t)\}_{i=1}^N$ as a collection of $N$ state points sampled from $\mathcal{X}$ at $t \geq 0$ and {\xywv $u_i = K(x_i, t)$}
as the exploratory policy.
The Bellman error at each sampled state is obtained as
\begin{align} \label{e4.3.3}
    \delta_i = \hat{W}_c^{\top} \varphi_i + l(x_i, u_i), 
\end{align}
where $\varphi_i = \nabla \phi(x_i) (f(x_i) + g(x_i)u_i)$.
The tuning law for the critic NN weights is 
\begin{align} 
    \dot{\hat{W}}_c &= -\Gamma \Big(k_{c1} \frac{\varphi}{{\xyr \vartheta}^2} \delta + \frac{k_{c2}}{N}\sum_{i=1}^N \frac{\varphi_i}{{\xyr \vartheta}_i^2(t)} \delta_i\Big), \label{e4.3.4a} \\
    \dot{\Gamma}& = \beta \Gamma - \Gamma \Big( k_{c1} \Lambda + \frac{k_{c2}}{N}\sum_{i=1}^N \Lambda_i \Big) \Gamma, \label{e4.3.4b}
\end{align}
where ${\xyr \vartheta}(t) = 1 + \varphi(t)^{\top} \varphi(t)$, ${\xyr \vartheta}_i(t) = 1 + \varphi_i(t)^{\top} \varphi_i(t)$,
$$\Gamma(0) > 0,\ \Lambda(t) = \frac{\varphi(t) {\varphi(t)}^{\top}}{{\xyr \vartheta}^2(t)}\ \ \textnormal{and}\ \  \Lambda_i(t) = \frac{\varphi_i(t) {\varphi_i(t)}^{\top}}{{\xyr \vartheta}_i^2(t)},$$
are normalizing signals, $k_{c1},k_{c2}> 0$ are learning gains, and $\beta>0$ is a forgetting factor. 
Let $G_{\phi} = \nabla \phi(x) G_R(x) \nabla \phi(x)^{\top}$  \textnormal{and} $  G_{\phi, i} = \nabla \phi(x_i) G_R(x_i) \nabla \phi(x_i)^{\top}$, where $G_R(x)=g(x)R^{-1}g(x)^\top$.
The tuning law of the actor NN weights is 
\begin{align} \label{e4.3.5}
    \dot{\hat{W}}_a = & \operatorname{Proj}_{\mathcal{W}} \biggl\{-k_{a1}\left(\hat{W}_a - \hat{W}_c \right)  +  \frac{k_{c1}}{4 {\xyr \vartheta}^2} G_{\phi} \hat{W}_a \varphi^{\top} \hat{W}_c \nonumber \\
    & - k_{a2} \hat{W}_a + \sum_{i=1}^N \frac{k_{c2}}{4 N{\xyr \vartheta}_i^2} G_{\phi, i}\hat{W}_a \varphi_i^{\top}  \hat{W}_c \biggl\},
\end{align}
where $k_{a1}, k_{a2} > 0$ are learning gains, and $\operatorname{Proj}_{\mathcal{W}} \{\cdot \}$ is a projection operator for keeping the actor NN estimation within a compact set $\mathcal{W} \in \mathbb{R}^s$.

To provide a PE-like condition {\mg that can be easily {\hwr checked} during the learning process,} 
and facilitate the stability analysis, we make the following assumption.
%
\begin{assumption} \label{a3}
Given the NNs \eqref{e4.2.1} and \eqref{e4.2.2}, the following conditions hold: \footnote{Assumption \ref{a3} is standard in solving optimal control problem using RL \cite{zhang2014online}.
    The first assumption is made to exclude the non-existence solution. 
    The second assumption can be satisfied by taking sigmoid, Gaussian and other standard NN activation functions \cite{vamvoudakis2010online}. 
    The last condition can be satisfied by choosing a large number of sampled data \cite{kamalapurkar2016model}.}
\begin{enumerate} 
    \item The optimal weight $W$ is norm-bounded by an unknown positive constant $\bar{W}$.
    \item The activation function $\phi(x)$ and its derivative $\nabla \phi(x)$ are norm-bounded by positive constants $\bar{\phi}$ and $\bar{\phi}_\nabla$ {\xywv on $\mathcal{X}$}, respectively. 
    \item The sample number $N$ is large enough to satisfy the following PE-like condition 
       \begin{align*}
        \inf_{t\geq 0} \left\{\underline{\sigma} \left( \frac{1}{N} \sum_{i=1}^N \Lambda_i(t) \right)\right\} \geq  \lambda_c,
       \end{align*}
    where $\lambda_c$ is a positive constant.
\end{enumerate}
\end{assumption}


\subsection{Safety and stability analysis}
In this section, we analyze the safety and stability of the closed-loop system \eqref{eq:disturbed_sys} under the control policy \eqref{e4.3.1}, i.e., 
\begin{align}\label{cls-eq}
        \dot{x}= f(x) + g(x) (K(x, t) + u^s(x, K_s) {\xyr + \tilde{u}^f}).
\end{align}
\begin{lemma} \label{lm4}
    Suppose Assumptions \ref{a2}-\ref{a3}, condition \eqref{uf-relaxed} and inequality \eqref{eq:do_stable_condition} hold.
    {\xyr Let $\mathcal{H}: \mathbb{R} \rightarrow \mathbb{R}$ be a continuously differentiable function satisfying \eqref{eq:ho_safeguard_condition} and the control policy be \eqref{e4.3.1}, where $u^s$ is defined in \eqref{e3.2.1a} with $K_s$ updated by \eqref{e3.2.1b}, $\hat{u}^f$ is estimated by \eqref{eq:disturbance_observer} and $K(x,t)$ is defined in \eqref{e4.2.2} with NN weight updated by \eqref{e4.3.5}.
    If $x_0 \in \bar{\mathscr{C}}$ and $B(x) =\mathcal{H}(\psi_{m-1}(x))$}, then the set $\bar{\mathscr{C}}$ is forward invariant for system \eqref{cls-eq}.
\end{lemma}
\begin{proof}
    {\xyr
    By Assumption \ref{a3}, one has $$\lVert g(x)^{\top} \nabla \phi(x)^{\top} \rVert \leq \overline{\lVert g(x)\rVert}_{\mathcal{X}} \bar{\phi}_\nabla.$$
    Further, \eqref{e3.2.1b} guarantees $\lVert \hat{W}_a\rVert \leq \bar{W}_a$, and thus $$\lVert K(x,t)\rVert \leq\frac{1}{2}\underline{\sigma}(R)^{-1}  \overline{\lVert g(x)\rVert}_{\mathcal{X}}\bar{\phi}_\nabla \bar{W}_a.$$
    By \eqref{eq:do_stable_condition} and \eqref{uf-relaxed}, the bound $\bar{\tilde{u}}^f$ is finite.
    Redefine $f_{cl}(x,t) = f(x) + g(x) (K(x,t)+\tilde{u}^f)$, and then 
    \begin{align*}
        \lVert f_{cl}(x,t)\rVert \leq& \overline{\lVert f(x)\rVert}_{\mathcal{X}} +\overline{\lVert g(x)\rVert}_{\mathcal{X}}\\
    &\times \left(\frac{1}{2}\underline{\sigma}(R)^{-1} \overline{\lVert g(x)\rVert}_{\mathcal{X}}\bar{\phi}_\nabla \bar{W}_a + \bar{\tilde{u}}^f\right).
    \end{align*}
    %
    %
    Since $\overline{\lVert f(x)\rVert}_{\mathcal{X}}$, $\overline{\lVert g(x)\rVert}_{\mathcal{X}}$, $\bar{\tilde{u}}^f$ and $\bar{W}_a$ are bounded, we have that $\lVert f_{cl}(x,t)\rVert$ is bounded.
    It then follows the similar developments as in the proof of Theorem \ref{thm:ho_safeguard_safety} that the set $\bar{\mathscr{C}}$ is forward invariant for system \eqref{cls-eq}.
    }
\end{proof}

%
%
Define the weight estimation errors as $\Tilde{W}_c = W - \hat{W}_c \ \ \textnormal{and} \ \ \Tilde{W}_a = W - \hat{W}_a,$  and a composite state $z = \col\bm(x, \Tilde{W}_c, \Tilde{W}_a, \Tilde{u}^f, K_s \bm)$.
To guarantee that the value function approximation error $\varepsilon(x)$ and its derivative $\nabla \varepsilon(x)$ are norm-bounded during the learning process, one can construct a compact set $\mathcal{Z} \subset \mathcal{X} \times \mathbb{R}^{2s+p+1}$ that contains the composite state trajectory.\footnote{Readers can be referred to \cite{kamalapurkar2016model} for more details in computing such a compact set.}
Define $\bar{\varepsilon} = \overline{\lVert \varepsilon (x)  \rVert}_\mathcal{X}$ and $\bar{\varepsilon}_\nabla = \overline{\lVert \nabla \varepsilon(x) \rVert}_\mathcal{X}$.
The following theorem illustrates that the robust and adaptive safe {\xyr RL framework} ensures the composite state {\mg is}
uniformly ultimately bounded.
%
\begin{thm} \label{t3}
    {\xyr
    Consider the closed-loop system \eqref{cls-eq}.}
    Let the safeguard controller and its adaptive law be given by \eqref{e3.2.1a} and  \eqref{e3.2.1b}, the disturbance/fault observer be constructed as \eqref{eq:disturbance_observer}, $K(x,t)$ be defined as \eqref{e4.2.2} with the critic NN updated by \eqref{e4.3.4a} and \eqref{e4.3.4b}, and the actor NN updated by \eqref{e4.3.5}, respectively.
    %
    %
    Suppose that Assumptions \ref{a2}-\ref{a3}, condition \eqref{uf-relaxed} and inequality \eqref{eq:do_stable_condition} hold.
    {\xyr If $x_0 \in \bar{\mathscr{C}}$ and $B(x)$ satisfies the conditions in Lemma \ref{lm4}}, then the control policy \eqref{e4.3.1} guarantees that the composite state $z(t)$ is uniformly ultimately bounded.
\end{thm}

\begin{proof}
Consider the following Lyapunov function
\begin{align} \label{ea.1}
    \hspace{-0.1in}P(z, t) = L_x(x, K_s) + L_c(\Tilde{W}_c, t) + L_a(\Tilde{W}_a) + L_u(\Tilde{u}^f),
\end{align}
where $L_c(\Tilde{W}_c, t) = \frac{1}{2} \Tilde{W}_c^{\top} \Gamma^{-1}(t) \Tilde{W}_c$, $L_a(\Tilde{W}_a) = \frac{1}{2} \Tilde{W}_a^{\top} \Tilde{W}_a$ and $L_u(\Tilde{u}^{f})$ is a positive definite function satisfying \eqref{eq:do_stable_condition}.

Under Assumption \ref{a3}, there exist positive constants $\bar{\Gamma}$ and $\underline{\Gamma}$ such that $\underline{\Gamma} I\leq  \Gamma(t) \leq \bar{\Gamma} I$, $\forall t \geq 0$ (\cite{rushikesh2016efficient}, Lemma 1).
Therefore, $P(z, t)$ is positive definite and hence $$\eta_1(\lVert z \rVert) \leq P(z, t) \leq \eta_2(\lVert z \rVert)$$ for some class $\mathcal{K}$ functions $\eta_1$ and $\eta_2$ \cite{khalil2002nonlinear}.

The time derivative of $V^*$ along the system \eqref{cls-eq} is
\begin{align*}
    \dot{V}^*  = \nabla {V^*}^{\top} \left(f(x)+g(x)(K(x,t) + u^s(x,K_s)+ \Tilde{u}^f)\right).
\end{align*}
Noting that $H(x, \nabla V^*, k^*) = 0$, one has 
\begin{align}\label{ea.2}
    \dot{V}^* = -x^{\top} Q x - {k^*}^{\top} R k^*+ \nabla {V^*}^{\top} g(x)(u^s + \Tilde{u}^f - \Tilde{k})
\end{align}
where $\Tilde{k} = k^* - K(x,t)
= -\frac{1}{2}R^{-1}g(x)^{\top} ( \nabla \phi^{\top} \Tilde{W}_a + \nabla \varepsilon).$


Taking \eqref{e3.2.1a}, \eqref{e3.2.1b} and \eqref{ea.2} into consideration, the time derivative of $L_x$ is
\begin{align} \label{ea.3}
    \dot{L}_x \leq & -(1 - \gamma)x^{\top} Q x + \frac{1}{2}W^{\top} G_{\phi} \Tilde{W}_a - YK_s^2 + \frac{\gamma}{4} \hat{W}_a^{\top} G_{\phi} \hat{W}_a \nonumber \\
    & +\frac{1}{2}\nabla \varepsilon^{\top} G_R \nabla \phi^{\top} \Tilde{W}_a + (W^{\top} \nabla \phi +\nabla \varepsilon^{\top}) g(x) \Tilde{u}^f \nonumber\\
    & -(W^{\top} \nabla \phi + \nabla \varepsilon^{\top}) G_R \nabla \mathcal{B} K_s   + \bar{\varepsilon}_1 \nonumber\\
    \leq & -(1 - \gamma)x^{\top} Q x - YK_s^2  + \frac{\gamma}{4}  \bar{G}_{\phi} \lVert W \rVert^2 + \frac{\gamma}{4} \bar{G}_{\phi} \lVert \tilde{W}_a \rVert^2  \nonumber\\ 
    &  + \frac{l_1}{2} \lVert \Tilde{W}_a \rVert  + l_2 \bar{g} \lVert \Tilde{u}^f \rVert  + l_2  \bar{K}_s \bar{G}_R \lVert \nabla \mathcal{B} \rVert_{\infty} + \bar{\varepsilon}_1, 
\end{align}
where $\bar{g}=\overline{\lVert g(x)\rVert}_\mathcal{X},$  
\begin{align*}
& G_\varepsilon = \nabla \varepsilon^{\top} G_R \nabla \varepsilon,~\bar{G}_R = \underline{\sigma}(R)^{-1}\bar{g}^2,~\bar{G}_{\phi} = \bar{G}_R \bar{\phi}_{\nabla}^2,\\
&\bar{\varepsilon}_1  = \frac{1}{2}\overline{\lVert W^{\top} \nabla \phi G_R \nabla \varepsilon + G_\varepsilon \rVert}_{\mathcal{X}},\\
& l_1 = (1-\gamma)\bar{W} \bar{G}_{\phi} + \bar{\varepsilon}_{\nabla} \bar{G}_R \bar{\phi}_{\nabla}\quad \text{and} \quad l_2 = \bar{W} \bar{\phi}_{\nabla} + \bar{\varepsilon}_{\nabla}.
\end{align*}
%



The time derivative of $L_c$ is 
\begin{align} \label{ea.4}
     \dot{L}_c = \Tilde{W}_c^{\top} \Gamma^{-1} \dot{\Tilde{W}}_c - \frac{1}{2} \Tilde{W}_c^{\top} \Gamma^{-1} \dot{\Gamma} \Gamma^{-1} \Tilde{W}_c,
\end{align}
where 
\begin{align}\label{ea.5}
    \dot{\Tilde{W}}_c =  \Gamma \left(k_{c1} \frac{\varphi}{{\xyr \vartheta}^2} \delta + \frac{k_{c2}}{N}\sum_{i=1}^N \frac{\varphi_i}{{\xyr \vartheta}_i^2} \delta_i\right).
\end{align}
Based on the fact that $$K^{\top} R K = \frac{1}{4}W^{\top} G_{\phi} W + \frac{1}{4} \Tilde{W}_a^{\top} G_{\phi} \Tilde{W}_a - \frac{1}{2}W^{\top} G_{\phi} \Tilde{W}_a$$ and $H(x, \nabla V^*, k^*) = 0$, one has 
\begin{align*}
     W^{\top} \nabla \phi & f(x) + x^{\top} Q x - \frac{1}{4}W^{\top} G_{\phi} W \\
    &= \frac{1}{4} \varepsilon^{\top} G_R \varepsilon + \frac{1}{2} \nabla \varepsilon^{\top} G_R \nabla \phi^{\top} W - \nabla \varepsilon^{\top} f(x)
\end{align*}
and then the Bellman error \eqref{e4.3.2} can be rewritten as
\begin{align} \label{ea.6}
    \delta = & - \varphi^{\top} \Tilde{W}_c + x^{\top} Q x + u^{\top} R u + \varphi^{\top} W  \nonumber \\
    = & - \varphi^{\top} \Tilde{W}_c + \frac{1}{4} \Tilde{W}_a^{\top} G_{\phi} \Tilde{W}_a + \varepsilon_{H} + \xi,
\end{align}
where  $\varepsilon_H = \frac{1}{4} \varepsilon^{\top} G_R \varepsilon + \frac{1}{2} \nabla \varepsilon^{\top} G_R \nabla \phi^{\top} W - \nabla \varepsilon^{\top} f(x)$ and 
\begin{align*}
    \xi =&  - \Tilde{W}_a^{\top} \nabla \phi G_R \nabla \mathcal{B} K_s - \Tilde{W}_a^{\top} \nabla \phi g(x) u^f + \Tilde{W}_a^{\top} \nabla \phi g(x) \Tilde{u}^f \\
    & + \nabla \mathcal{B}^{\top} G_R \nabla \mathcal{B} K_s^2 +2 \nabla \mathcal{B}^{\top} g(x) u^f K_s - 2 K_s \nabla \mathcal{B}^{\top} g(x) \Tilde{u}^f \\
    & + u^{fT} R u^f + \Tilde{u}^{fT} R \Tilde{u}^f - 2 u^{fT} R \Tilde{u}^f.
\end{align*}
Similarly, the Bellman error \eqref{e4.3.3} estimated at $x_i$, sampled from the compact set $\mathcal{X}$, can be obtained as
\begin{align} \label{ea.7}
    \delta_i & = \hat{W}_c^{\top} \nabla \phi_i (f_i(x) + g_i(x) K_i) + x_i^{\top} Q x_i + K_i^{\top} R K_i, \nonumber \\
    &  = - \varphi_i^{\top} \Tilde{W}_c + \frac{1}{4} \Tilde{W}_a^{\top} G_{\phi, i} \Tilde{W}_a + \varepsilon_{H, i},
\end{align}
where $\varepsilon_{H, i} = \varepsilon_{H}(x_i)$ and $K_i = K(x_i)$.

Combining \eqref{ea.5}, \eqref{ea.6} and \eqref{ea.7}, \eqref{ea.4} can be put as
\begin{align} \label{ea.8}
    \dot{L}_c = & -\frac{1}{2} \Tilde{W}_c^{\top} \left( \beta \Gamma^{-1} + k_{c1} \Lambda + \frac{k_{c2}}{N}\sum_{i=1}^N \Lambda_i \right) \Tilde{W}_c \nonumber \\
    &+ \Tilde{W}_c^{\top} \left(  k_{c1} \frac{\varphi}{{\xyr \vartheta}^2} \varepsilon_{H} + \frac{k_{c2}}{N}\sum_{i=1}^N \frac{\varphi_i}{{\xyr \vartheta}_i^2} \varepsilon_{H, i} \right) \nonumber \\
    & + \Tilde{W}_c^{\top} k_{c1} \frac{\varphi}{{\xyr \vartheta}^2} \xi + \frac{1}{4} \Tilde{W}_c^{\top} \varrho,
\end{align}
where $ \varrho = k_{c1} \frac{\varphi}{{\xyr \vartheta}^2} \Tilde{W}_a^{\top} G_{\phi} \Tilde{W}_a + \frac{k_{c2}}{N}\sum_{i=1}^N \frac{\varphi_i}{{\xyr \vartheta}_i^2} \Tilde{W}_a^{\top} G_{\phi, i} \Tilde{W}_a.$

Since $f(x)$ is locally Lipschitz, its Lipschitz constant is bounded in the compact set $\mathcal{X}$.
Define $\bar{\varepsilon}_H = \overline{\lVert \varepsilon_H\rVert}_{\mathcal{X}}$.
Using the fact that the inequality $\left\lVert \frac{w}{(1+w^{\top} w)^2} \right\rVert \leq \frac{3\sqrt{3}}{16}$ always holds for any vector $w \in \mathbb{R}^q$, we further have 
\begin{align} \label{ea.9}
    \dot{L}_c \leq & -k_{c2} \bar{\lambda}_c \lVert \Tilde{W}_c\rVert^2 + \frac{3\sqrt{3}}{16}\bar{k}_c \bar{\varepsilon}_H \lVert \Tilde{W}_c \rVert \nonumber \\
    & + \Tilde{W}_c^{\top} k_{c1} \frac{\varphi}{{\xyr \vartheta}^2} \xi + \frac{1}{4} \Tilde{W}_c^{\top} \varrho,
\end{align}
where $\bar{\lambda}_c = \frac{\beta}{2 k_{c2} \bar{\Gamma}} + \frac{\lambda_c}{2}$ and $\bar{k}_c = k_{c1}+k_{c2}$.

Since the projector in \eqref{e4.3.5} guarantees $\lVert \hat{W}_a \rVert \leq \bar{W}_a$ for some $\bar{W}_a > 0$ in a compact set $\mathcal{W}$, one has {\xyr$\lVert \Tilde{W}_a \rVert \leq \lVert \hat{W}_a \rVert + \lVert W \rVert,$} i.e., $\Tilde{W}_a$ is norm-bounded by $\bar{W}_a + \bar{W}$.
Due to the fact that  $\Tilde{W}_a$ and $\Tilde{u}^f$ are norm-bounded, the inequality $\dot{L}_x$ \eqref{ea.3} implies that $x$ and $K_s$ are norm-bounded.
Hence, one can obtain the following inequalities
\begin{align}\label{ea.10}
    \Tilde{W}_c^{\top} k_{c1} \frac{\varphi}{{\xyr \vartheta}^2} \xi \leq k_{c1} \frac{3\sqrt{3}}{16} \lVert \Tilde{W}_c \rVert \lVert \xi \rVert
\end{align}
and 
\begin{align}\label{ea.11}
    \lVert \xi \rVert \leq & \left( \bar{K}_s \bar{G}_R \lVert \nabla \mathcal{B} \rVert_{\infty}  +  (\eta_1 + \bar{\Tilde{u}}^f)  \bar{g}  \right) \bar{\phi}_\nabla \lVert \Tilde{W}_a \rVert \nonumber\\
    & + 2 (\eta_1 \bar{\sigma}(R) + \bar{g} \bar{K}_s \lVert \nabla \mathcal{B} \rVert_{\infty}) \lVert \Tilde{u}^f \rVert + \bar{K}_s^2 \bar{G}_R \lVert \nabla \mathcal{B} \rVert_{\infty}^2 \nonumber\\
    &  + 2 \eta_1 \bar{g} \lVert \nabla \mathcal{B} \rVert_{\infty}  \lVert K_s \rVert   + \bar{\sigma}(R) \left(\eta_1^2 + (\bar{\Tilde{u}}^{f})^2 \right).
\end{align}

Similarly, taking the time derivative of $L_a$ along \eqref{e4.3.5} yields
\begin{align} \label{ea.12}
    \dot{L}_a \leq & -\bar{k}_a \Tilde{W}_a^{\top}\Tilde{W}_a + k_{a1}\Tilde{W}_a^{\top} \Tilde{W}_c + k_{a2} \Tilde{W}_a^{\top} W \nonumber\\
    & -  k_{c1} \Tilde{W}_a^{\top} \frac{G_{\phi}}{4 {\xyr \vartheta}^2} \hat{W}_a \varphi^{\top} \hat{W}_c - \frac{k_{c2}}{N} \sum_{i=1}^N \Tilde{W}_a^{\top} \frac{G_{\phi, i}}{4 {\xyr \vartheta}_i^2} \hat{W}_a \varphi_i^{\top}  \hat{W}_c \nonumber\\
    = &  -\bar{k}_a \Tilde{W}_a^{\top}\Tilde{W}_a + k_{a1}\Tilde{W}_a^{\top} \Tilde{W}_c + k_{a2} \Tilde{W}_a^{\top} W \nonumber\\
    & -  k_{c1} \Tilde{W}_a^{\top} \frac{G_{\phi}}{4 {\xyr \vartheta}^2} W \varphi^{\top} W  + k_{c1} \Tilde{W}_a^{\top} \frac{G_{\phi}}{4 {\xyr \vartheta}^2} \Tilde{W}_a \varphi^{\top} W \nonumber\\
    & + k_{c1} \Tilde{W}_a^{\top} \frac{G_{\phi}}{4 {\xyr \vartheta}^2} W \varphi^{\top} \Tilde{W}_c -  \frac{k_{c2}}{N} \sum_{i=1}^N \Tilde{W}_a^{\top} \frac{G_{\phi, i}}{4 {\xyr \vartheta}_i^2} W \varphi_i^{\top} W  \nonumber\\
    & - \frac{1}{4} \Tilde{W}_c^{\top} \Big( k_{c1} \frac{\varphi}{{\xyr \vartheta}^2} \Tilde{W}_a^{\top} G_{\phi} \Tilde{W}_a + \frac{k_{c2}}{N}\sum_{i=1}^N \frac{\varphi_i}{{\xyr \vartheta}^2_i} \Tilde{W}_a^{\top} G_{\phi, i} \Tilde{W}_a \Big) \nonumber\\
    & + \frac{k_{c2}}{N} \sum_{i=1}^N \Tilde{W}_a^{\top} \frac{G_{\phi, i}}{4 {\xyr \vartheta}_i^2} (\Tilde{W}_a \varphi_i^{\top} W + W \varphi_i^{\top} \Tilde{W}_c),
\end{align}
where $\bar{k}_a = k_{a1}+k_{a2}$.

Combining \eqref{ea.9} and \eqref{ea.12} yields 
\begin{align} \label{ea.13}
    \dot{L}_c + \dot{L}_a \leq & -k_{c2} \bar{\lambda}_c \lVert \Tilde{W}_c\rVert^2 -(\bar{k}_a - \mu_a) \lVert \Tilde{W}_a\rVert^2 \nonumber\\
    & + \frac{3\sqrt{3}}{16} \bar{k}_c \bar{\varepsilon}_H \lVert \Tilde{W}_c \rVert   + (k_{a1}+ \mu_a) \lVert \Tilde{W}_a \rVert \lVert \Tilde{W}_c \rVert \nonumber\\
    &  + ( k_{a2} + \mu_a) \bar{W} \lVert \Tilde{W}_a\rVert + k_{c1} \frac{3\sqrt{3}}{16} \lVert \Tilde{W}_c \rVert \lVert \xi \rVert
\end{align}
where $\mu_a  = \frac{3\sqrt{3}}{64} \bar{k}_c \bar{W} \bar{G}_{\phi}$.
Define $\bar{\mu}_a =\mu_a +  \frac{\gamma}{4} \bar{G}_{\phi}$.
Then
\begin{align} \label{ea.14}
    \dot{P} = & \dot{L}_x + \dot{L}_c + \dot{L}_a + \dot{L}_u  \nonumber \\
    \leq 
    & -(1 - \gamma)x^{\top} Q x - YK_s^2 -k_{c2} \bar{\lambda}_c \lVert \Tilde{W}_c\rVert^2 -K_f \lVert \Tilde{u}^f \rVert^2 \nonumber  \\
    & - (\bar{k}_a -\bar{\mu}_a) \lVert \Tilde{W}_a\rVert^2  + l_c \lVert \Tilde{W}_c \rVert  + l_a \lVert \Tilde{W}_a \rVert + l_k \lVert K_s \rVert
    \nonumber \\
    &  + l_u \lVert \Tilde{u}^f \rVert + l_{ac} \lVert \Tilde{W}_c \rVert \lVert \Tilde{W}_a \rVert + l_{kc} \lVert \Tilde{W}_c \rVert \lVert K_s \rVert + \frac{\gamma}{4}  \bar{G}_{\phi} \bar{W}^2 \nonumber \\
    &   + l_{fc} \lVert \Tilde{W}_c \rVert \lVert \Tilde{u}^f \rVert  + l_2  \bar{K}_s \bar{G}_R \lVert \nabla \mathcal{B} \rVert_{\infty} + \bar{\varepsilon}_1 ,
\end{align}
%
where 
\begin{align*}
    l_c  = &~\frac{3\sqrt{3}}{16}\bar{k}_c \bar{\varepsilon}_H + \frac{3\sqrt{3}}{16}k_{c1} \bar{\sigma}(R) \left(\eta_1^2 + (\bar{\Tilde{u}}^{f})^2 \right) \\
     & + \frac{3\sqrt{3}}{16}k_{c1} \bar{K}_s^2 \bar{G}_R \lVert \nabla \mathcal{B}\rVert_{\infty}^2, \\
    l_a  = &~\frac{l_1}{2} + (k_{a2} + \mu_a) \bar{W}, ~~~l_k  = l_2 \bar{G}_R \lVert \nabla \mathcal{B} \rVert_{\infty},\\
    l_u  = &~l_2 \bar{g} + \lVert \dot{u}^f(t) \rVert_\infty,~~l_{kc}  = \frac{3\sqrt{3}}{8}k_{c1}\eta_1 \bar{g} \lVert \nabla \mathcal{B} \rVert_{\infty},\\
    l_{ac}  = &~k_{a1} + \mu_a + \frac{3\sqrt{3}}{16}k_{c1} \left( \bar{K}_s \bar{G}_R \lVert \nabla \mathcal{B} \rVert_{\infty} \right. \\
    &\left. +  (\eta_1 + \bar{\Tilde{u}}^f)  \bar{g}  \right) \bar{\phi}_\nabla,  \\
    l_{fc}  = &~\frac{3\sqrt{3}}{8}k_{c1}(\eta_1 \bar{\sigma}(R) + \bar{g} \bar{K}_s \lVert \nabla \mathcal{B} \rVert_{\infty}).
\end{align*}

Using Young's Inequality, \eqref{ea.14} can be further written as
\begin{align} \label{ea.15}
    \dot{P} \leq & -(1 - \gamma)x^{\top} Q x - \frac{Y}{4}K_s^2 - \frac{k_{c2} \bar{\lambda}_c}{4} \lVert \Tilde{W}_c\rVert^2  \nonumber  \\
    & - \frac{\bar{k}_a}{4} \lVert \Tilde{W}_a\rVert^2 -\frac{K_f}{4} \lVert \Tilde{u}^f \rVert^2 - y M y^{\top} + \varsigma,
\end{align}
where $y = \col\bm\left(\lVert \Tilde{W}_c \rVert, \lVert  \Tilde{W}_a \rVert, \lVert \Tilde{u}^f \rVert, \lVert K_s \rVert \bm\right)$, $\varsigma = \frac{l_c^2}{2k_{c2} \lambda_c} + \frac{l_a^2}{2\bar{k}_a} + \frac{l_k^2}{2Y} + \frac{L_u^2}{2K_f} + \frac{\gamma}{4}  \bar{G}_{\phi} \bar{W}^2 + l_2  \bar{K}_s \bar{G}_R \lVert \nabla \mathcal{B} \rVert_{\infty}  + \bar{\varepsilon}_1$ and 
\begin{align*}
    M = 
    \begin{bmatrix}
        \frac{k_{c2} \bar{\lambda}_c}{4} & -\frac{l_{ac}}{2} & -\frac{l_{kc}}{2} & -\frac{l_{fc}}{2} \\
        -\frac{l_{ac}}{2} & \frac{\bar{k}_a}{4} - \bar{\mu}_a & 0 & 0 \\
        -\frac{l_{kc}}{2} & 0 & \frac{Y}{4} & 0\\
        -\frac{l_{fc}}{2} & 0 & 0 & \frac{K_f}{4}
    \end{bmatrix}.
\end{align*}
By designing $k_{c1}, k_{c2}, k_{a1}, k_{a2}, \beta, Y, \gamma, \omega, \Gamma(0)$ such that 
\begin{align*}
 l_{fc} &\leq \frac{K_f}{2},\quad 2\mu_a + l_{ac} \leq \frac{\bar{k}_a}{2},\\
l_{kc} &\leq \frac{Y}{2} \quad \textnormal{and}\quad  l_{ac} + l_{kc} + l_{fc} \leq  \frac{k_{c2} \bar{\lambda}_c}{2}
\end{align*}
 one has $\underline{\sigma}(M) \geq 0$ according to Geršgorin circle criterion.
Then 
\begin{align} \label{ea.16}
    \dot{P} \leq & -(1 - \gamma)x^{\top} Q x - \frac{Y}{4}K_s^2 - \frac{k_{c2} \bar{\lambda}_c}{4} \lVert \Tilde{W}_c\rVert^2 \nonumber \\
    & - \frac{\bar{k}_a}{4} \lVert \Tilde{W}_a\rVert^2 -\frac{K_f}{4} \lVert \Tilde{u}^f \rVert^2 + \varsigma \nonumber  \\
    \leq & - \underbrace{\operatorname{min}\Big \{1 - \gamma, \frac{Y}{4}, \frac{k_{c2} \bar{\lambda}_c}{4},  \frac{\bar{k}_a}{4}, \frac{K_f}{4}\Big\}}_{\kappa>0}\lVert z \rVert^2 + \varsigma.
\end{align}
%
The time derivative of \eqref{ea.1} is upper bounded as follows
\begin{align*}
    \dot{P} \leq - \frac{\kappa}{2} \lVert z \rVert^2, & &  \forall \lVert z \rVert  \geq \sqrt{\frac{2\varsigma}{\kappa}}.
\end{align*}
According to Theorem 4.18 in \cite{khalil2002nonlinear}, one can conclude that the state trajectories $x$, the NN weights estimation errors $\Tilde{W}_c$, $\Tilde{W}_a$, the adaptive safeguard gain $K_s$ and the fault observation error $\Tilde{u}^f$ are uniformly ultimately bounded. 
\end{proof}

\section{Simulation examples} \label{sec_simulation}
In this section, we study the performance of the proposed learning method relative to other existing safe RL methods.
%
\begin{example}
%
Consider an inverted pendulum system \cite{wabersich2023data}  
\begin{align*}
\begin{bmatrix}
    \dot{x}_1\\
    \dot{x}_2
\end{bmatrix} =
\begin{bmatrix}
    {\xywr x_2}\\({g_{\theta}}/{l_{\theta}})\sin(x_1)
\end{bmatrix} 
+
\underbrace{\begin{bmatrix}
    0\\{1}/{m_{\theta}l_{\theta}^2}
\end{bmatrix} 
}_{{\xyr g(x_1,x_2)}}
(u+ u^f),
\end{align*}
where $x_1=\theta$ is the angle, $x_2= \dot{\theta}$ is the angular velocity, $u$ and $u^f$ are the input torque and the disturbance/actuator fault, respectively; $g_\theta = [0,{1}/{m_{\theta}l_{\theta}^2}]^\top$.
%
The initial states are $\theta(0) = 0.5 \si{\degree}$ and $\dot{\theta}(0) = \SI{10}{\degree / \second}$.
The mass is   $m_{\theta} = 2 \; \si{kg}$,  the length is $l_{\theta} = 1 \; \si{m}$, and the gravitational acceleration is $g_{\theta} = 10 \; \si{m/s^2}$.
The state constraints are $\theta \leq 0.8 \si{\degree}$ and $\dot{\theta} \geq \SI{-2}{\degree /\second}$.
The relative degree of the constraint on angle is two, and the relative degree of the constraint on angular velocity is one.
For $\dot{\theta} \geq -2 \si{\degree / \second}$, define $\psi_{\dot{\theta},0} =\dot{\theta} + 2$ and $B_{\dot{\theta}} = \frac{1}{\psi_{\dot{\theta},0} }$; for $\theta \leq 0.8\si{\degree}$, define $\psi_{\theta,0} =0.8-\theta$, $\psi_{\theta,1} =-\dot{\theta} + 100(0.8 - \theta)$ and $B_{\theta} = \frac{1}{\psi_{\theta,1}}$.
{\xyr Then $\mathcal{L}_g B_{\dot{\theta}} = -\frac{1}{m_\theta l_\theta^2\psi_{\dot{\theta},0}^2}\neq 0$ for all $\psi_{\dot{\theta},0}>0$ and $\mathcal{L}_g B_{\theta} = \frac{1}{m_\theta l_\theta^2\psi_{\theta,1}^2}\neq 0$ for all $\psi_{\theta,1} > 0$, which implies that $B_{\dot{\theta}}$ and $B_{\theta}$ are HO-RCBFs.}
The design parameters used in this example are: $Q = I,~ R = 1,~\phi(\theta, \dot{\theta}) = [\theta^2, \theta\dot{\theta},\dot{\theta}^2]^\top,~k_{c1} = 0.1, ~k_{c2} = 1, ~k_{a1} = 100, ~k_{a2} = 1, ~\beta = 0.1,~\hat{W}_c(0) = \hat{W}_a(0) = [40, 120, 30], ~\Gamma(0) = 1000I,~\omega(\theta, \dot{\theta}) = 20\,\dot{\theta},~\textnormal{and}~K_s = 1$.

In this example, we will compare our proposed method with classical RL \cite{vamvoudakis2010online}, penalty-based RL \cite{marvi2021safe} and safety-filter-based RL \cite{peng2023design}. 
The performance weights are selected as $Q = I, R = 1$.
The activation function is chosen as $\phi(\theta, \dot{\theta}) = \begin{bmatrix} \theta^2 &\theta\dot{\theta}& \dot{\theta}^2\end{bmatrix}$.
The learning gains are set as $k_{c1} = 0.1, k_{c2} = 1, k_{a1} = 100, k_{a2} = 1, \beta = 0.1$, and the initial weights are designed as $\hat{W}_c(0) = \hat{W}_a(0) = [40, 120, 30], \Gamma(0) = 1000I$.
The user-defined function in the observer is selected as $\omega(\theta, \dot{\theta}) = 20\,\dot{\theta}$.
The safeguard gain is selected as $K_s = 1$.

\begin{figure*}[htb]
    \centering
    \begin{subfigure}{0.45\textwidth}
        \centering
        \includegraphics[width=1\textwidth, height=0.3\textheight]{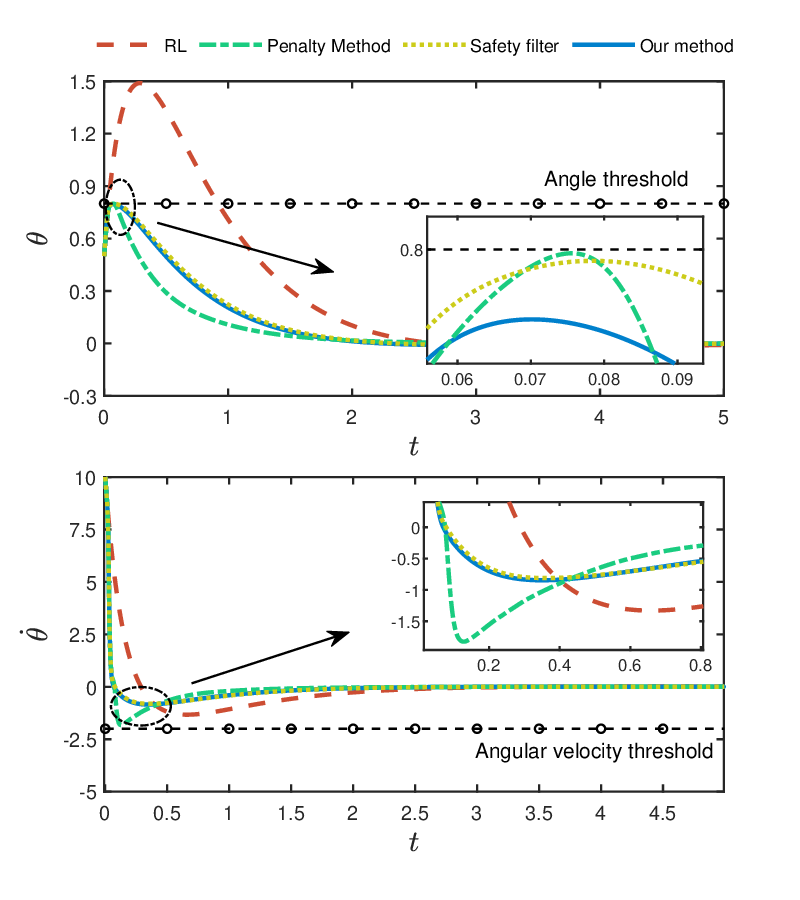}
        \caption{}
        \label{fig2.a}
    \end{subfigure}  
    \quad
    \begin{subfigure}{0.45\textwidth}
        \centering
        \includegraphics[width=1\textwidth, height=0.3\textheight]{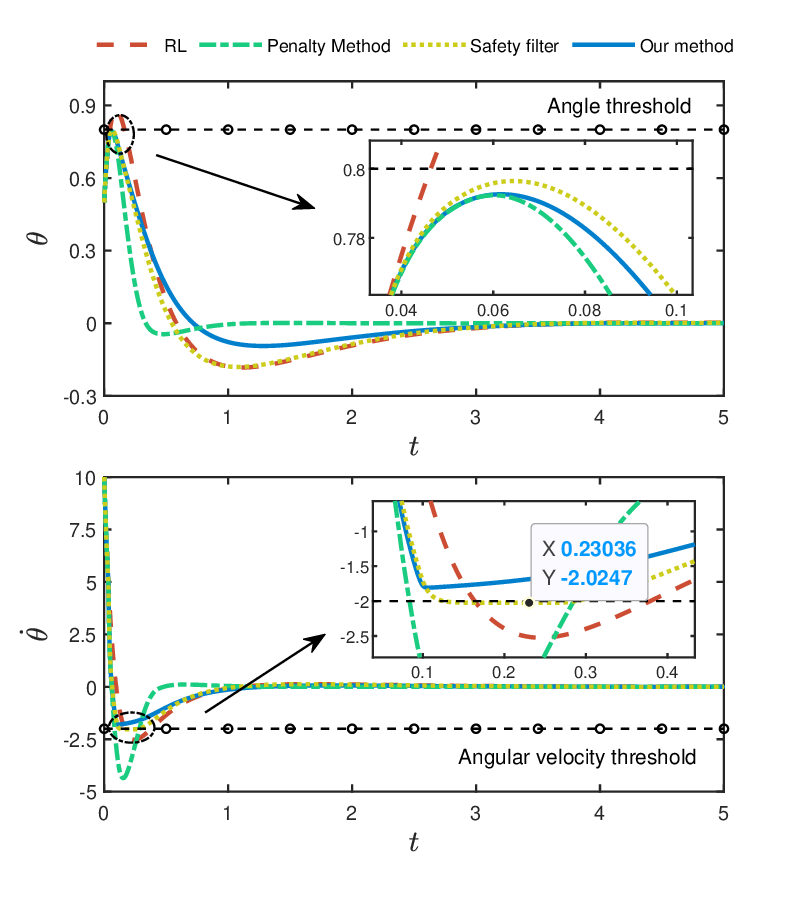}
        \caption{}
        \label{fig2.b}
        \end{subfigure}  
    \caption{{\xywv Trajectories of the angle and angular velocity under four algorithms, i.e., classical RL, RL with penalty term, RL with safety filter and our method: (a) without disturbance/actuator fault; (b) with disturbance/actuator fault} }
    \label{fig1:combined}
\end{figure*}

%
%

%
First, let $u^f(t) = 0$. 
As shown in Fig.\ref{fig2.a}, the classical RL policy will violate the constraint on angle, while all the other three safe learning methods can perform the control task without violating any state constraints.
Then, let the disturbance $u^f(t)= -5 + 0.01 \sin(t) + 0.03 \cos(t) + 0.05 \sin(2t) + 0.04 \cos(2t)$.
As shown in Fig.\ref{fig2.b}, the classical RL policy can  stabilize the system under unknown disturbance/actuator fault, but the angle constraint is violated during the learning process.
The proposed method in this study can complete the control task while guaranteeing the safety of the system.
When the safety filter is introduced into the RL method, the velocity constraint is violated due to $u^f$. 
The control policy learned from the RL method with a CBF-based penalty term cannot keep the velocity trajectory inside the safe region, which may be due to the choice of the penalty term, and the trade-off between the angle constraint and velocity constraint.
\end{example}
%
%

%


\begin{example}


In this example, the proposed method is shown to handle barrier functions with different relative degrees.  Consider a mobile robot described by a double integrator $\dot{p} = v$ and $\dot{v} = u$, where $p = [p_1,p_2]^{\top}, v = [v_1, v_2]^{\top}$ and $u \in \mathbb{R}^2$ represent the position, the velocity and the input, respectively.
%
The task of the mobile robot is to reach the origin without leaving a circular area centered at the origin while avoiding obstacles and maintaining a defined range of velocity.

Obstacle areas $O_i$, $i =  1,2,\cdots, N$, are placed randomly inside the circular area, and the collision avoidance constraint for $i$th obstacle area is given by $\lVert p - c_i\rVert^2 \geq r_i^2$, where $c_i$ and $r_i$ are the center and radius of the $i$th obstacle, respectively.
The circular area constraint is described as $\lVert p \rVert \leq r^2$, where $r$ is the radius of the area.
The constraints on the robot velocity are set to $v_1,v_2\in[-0.8,0.8] \si{m / \second}$.

For the velocity constraint with relative degree one, define $B_{v_j} = \frac{1}{h_{v_{j, max}}} + \frac{1}{h_{v_{j, min}}},~ \textnormal{for}~ j = 1, 2$, where $h_{v_{j, max}}  = 0.8 - v_j $ and $h_{v_{j, min}}  = v_j +0.8$.
For collision avoidance constraint with relative degree two, define $\psi_{o_i, 0} = h_{o_i}, ~\psi_{o_i, 1} = \dot{\psi}_{o_i, 0} + h_{o_i}$ and $B_{o_i} = \frac{1}{\psi_{o_i, 1}},\  \textnormal{for}\ i = 1,2,\cdots, N, $ where $h_{o_i} = \lVert p - c_i \rVert^2 -  r_i^2$.
Similarly, for the area constraint, define $\psi_{area, 0} = h_{area}, \quad \psi_{area, 1} = \dot{\psi}_{area, 0} + h_{area}$ and $B_{area} = \frac{1}{\psi_{area, 1}}$, where $h_{area} = r^2 - \lVert p \rVert^2$.
{\xyr Following a similar development in Example 2, we can show that $B_{v_j}, B_{o_i}$ and $B_{area}$ are HO-RCBFs.}
Define the safeguard gains for the position and velocity constraint as $K_{s,p}$ and $K_{s,v}$, respectively. 
The design parameters used in this example are: 
$Q = R = I,~Y = 500,~\gamma = 0.001,~K_{s,p}(0) = 10,~K_{s,v}(0) = 0.01,~\phi(p,v)=[ p_1^2, p_2^2, v_1^2, v_2^2, p_1p_2, p_1 v_1, p_1 v_2, p_2 v_1, p_2 v_2, v_1 v_2 ]^{\top},~k_{c1} = 0.1, k_{c2} = 1,~ k_{a1} = 100,~ k_{a2} = 1,~ \beta = 0.1,~ \Gamma(0) = 1000I$.
%




\begin{figure*}[htb]
    \centering
    \begin{subfigure}{0.45\textwidth}
        \centering
        \includegraphics[width=1\textwidth]{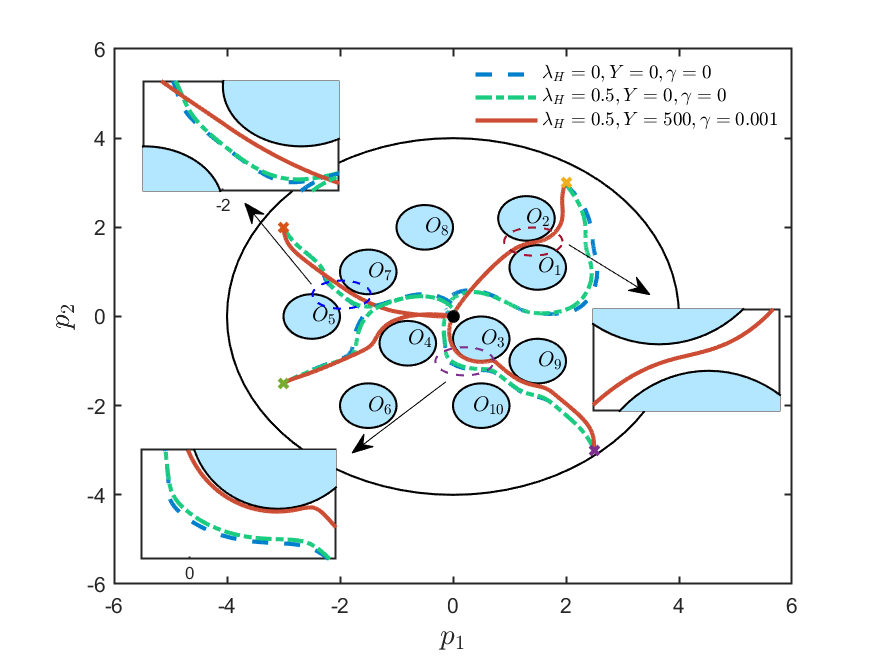}
        \caption{}
        \label{fig3.1}
    \end{subfigure} 
     \quad
    \begin{subfigure}{0.45\textwidth}
        \centering
        \includegraphics[width=1\textwidth]{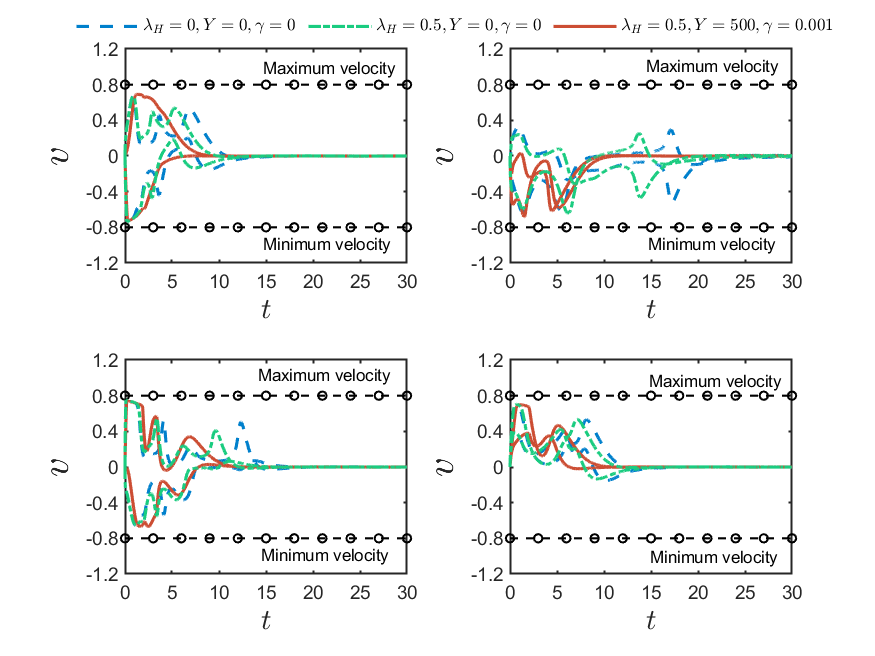}
        \caption{}
        \label{fig3.2}
        \end{subfigure}  
    \caption{Position and velocity trajectories of the mobile robot under HO-RCBF-based RL with a constant safeguard gain (blue dashed line), HO-RCBF-based RL with an adaptive safeguard gain (green dashed line and red solid line).}
    \label{fig3:combined}
\end{figure*}


\begin{figure}[htbp]
    \centering
    \includegraphics[width=0.5 \textwidth, height=0.22 \textheight]{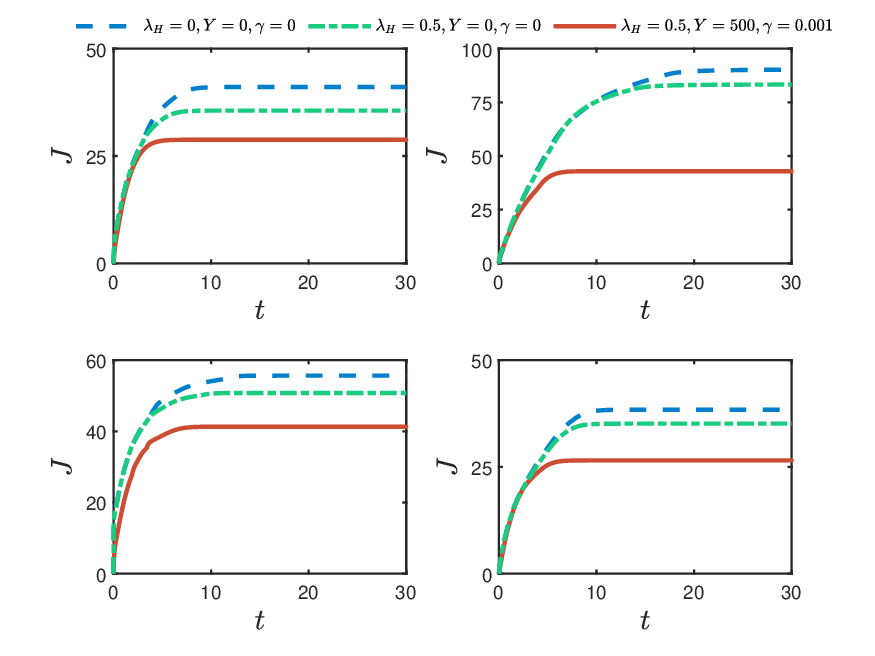}    
    \caption{Cost function of the mobile robot under both HO-RBCF-based RL (blue dashed line) and adaptive HO-RBCF-based RL algorithms (green dashed line and red solid line). 
    }  \label{fig4}  
\end{figure}

We illustrate the improved performance of the gradient manipulation technique and the adaptive mechanism in the search for the optimal trajectory. To show this, we apply the HO-RCBF-based RL with $\mu = 0, \gamma = 0, Y = 0$, the adaptive HO-RCBF-based RL with $\mu = 0.5, \gamma = 0, Y = 0$, and the adaptive HO-RCBF-based RL with $\mu = 0.5, Y = 500, \gamma = 0.001$.
For all methods, $k_{s,v}$ is fixed, and $k_{s,p}$ is updated using the adaptive mechanism.
As shown in Fig.\ref{fig3:combined}(\subref{fig3.1}) and Fig.\ref{fig3:combined}(\subref{fig3.2}), the robot using both the HO-RCBF-based RL algorithm and the adaptive HO-RCBF-based RL algorithm can reach the origin while satisfying all speed and position constraints.
However, the position trajectory of the robot using the HO-RCBF-based RL algorithm is more conservative than that the adaptive HO-RCBF-based RL algorithm (see Fig. \ref{fig3:combined}(\subref{fig3.1})).
Figure \ref{fig3:combined}(\subref{fig3.2}) additionally shows that the robot using the adaptive HO-RCBF-based RL algorithm can reach the target much faster than the HO-RCBF-based RL algorithm. 
We also consider the energy consumption $J$ required by the robot to perform its task. A higher value of $J$ indicates a greater energy consumption. 
As shown in Fig.\ref{fig4}, given different initial positions {$\col(-3, -2), \col(2, 3),\col(2.5, -3), \col(-3, -1.5)$}, the accumulated cost $J$ of the robot using the adaptive HO-RCBF-based RL algorithm is always significantly less than that of the HO-RCBF-based RL algorithm.


\end{example}

\section{Conclusion}
This paper proposed an adaptive robust safe learning framework that solve the constrained optimal control problem of nonlinear systems subject to disturbances/faults.
The relationship between performance and safety is quantified and a balance between these two objectives is achieved. An HO-RCBF is introduced, which can be integrated with RL to address constrained optimal control problems. The proposed safeguard control policy can handle safety constraints with high relative degree. Interestingly, it can be integrated into any stabilizing control law, designed without considering safety issue, to guarantee the safety of the system.  




\section*{References}
\bibliographystyle{IEEEtran}        
\bibliography{xywang}          
\vskip 1cm
\begin{IEEEbiography}[{\includegraphics[height=1.25in,keepaspectratio]{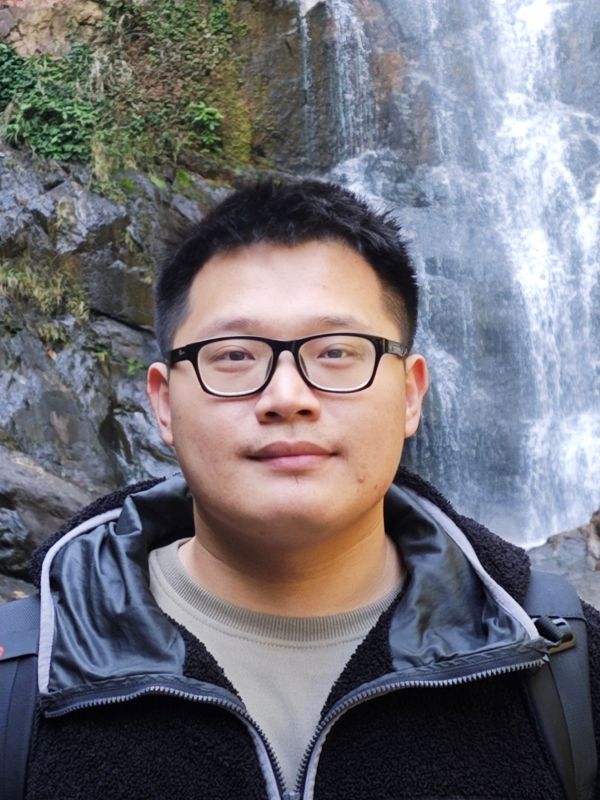}}]{Xinyang Wang} received the M.Eng. in control engineering from Harbin Institute of Technology, Harbin, China, in 2022. He is currently pursuing the Ph.D. in control science and control engineering from Harbin Institute of Technology, Shenzhen, China. His current research interests include reinforcement learning, safe learning, game theory and multi-agent systems.
\end{IEEEbiography}

\vskip 1cm

\begin{IEEEbiography}[{\includegraphics[height=1.25in,keepaspectratio]{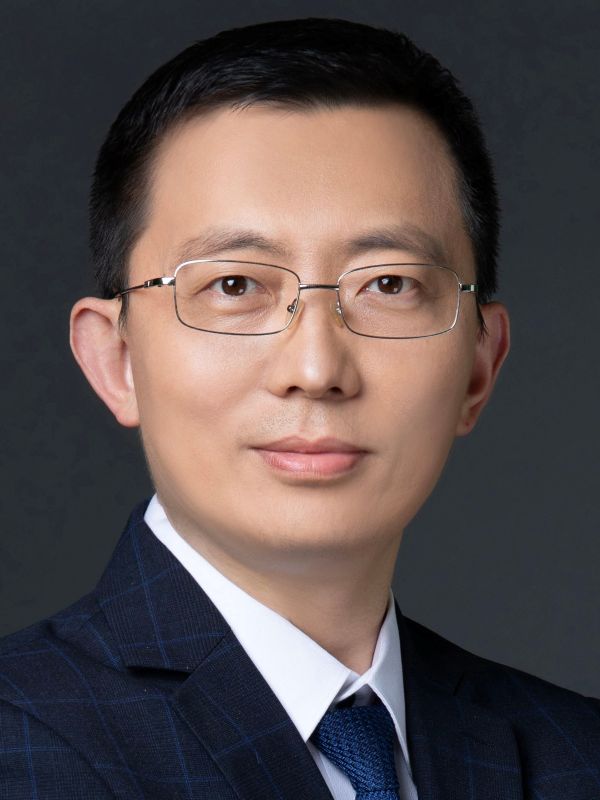}}]{Hongwei Zhang}  received 
the Ph.D. in mechanical and automation engineering from the Chinese University of Hong Kong in 2010.
Subsequently, he held postdoctoral positions with the University of Texas at Arlington, Arlington, TX, USA, and the City University of Hong Kong. From 2012 to 2020, he was with Southwest Jiaotong University, China, and then joined the Harbin Institute of Technology, Shenzhen, China, in 2020, as a Professor. His research interests are cooperative control of multiagent systems, with its applications to unmanned systems, microgrids, and active noise control.
Dr. Zhang is an Associate Editor for Neurocomputing.
\end{IEEEbiography}

\vskip 1cm

\begin{IEEEbiography}[{\includegraphics[width=1in,height=1.25in,keepaspectratio]{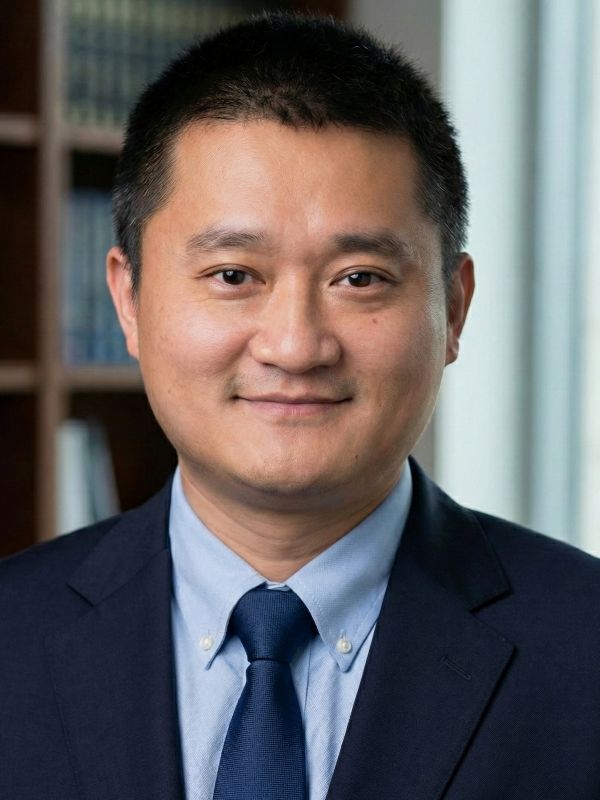}}]{Shimin Wang} received his Ph.D. from the Chinese University of Hong Kong. He is a recipient of the NSERC Postdoctoral Fellowship, the Best Poster Award at the 2024 Nonlinear Systems and Control Conference, and Best Conference Paper Awards at the 2025 IEEE International Conference on Unmanned Systems and 2018 IEEE International Conference on Information and Automation. His research bridges data science, embodied AI, and machine learning to build advanced intelligent autonomous systems. 
\end{IEEEbiography}

\vskip 1cm

\begin{IEEEbiography}[{\includegraphics[width=1in,height=1.25in,keepaspectratio]{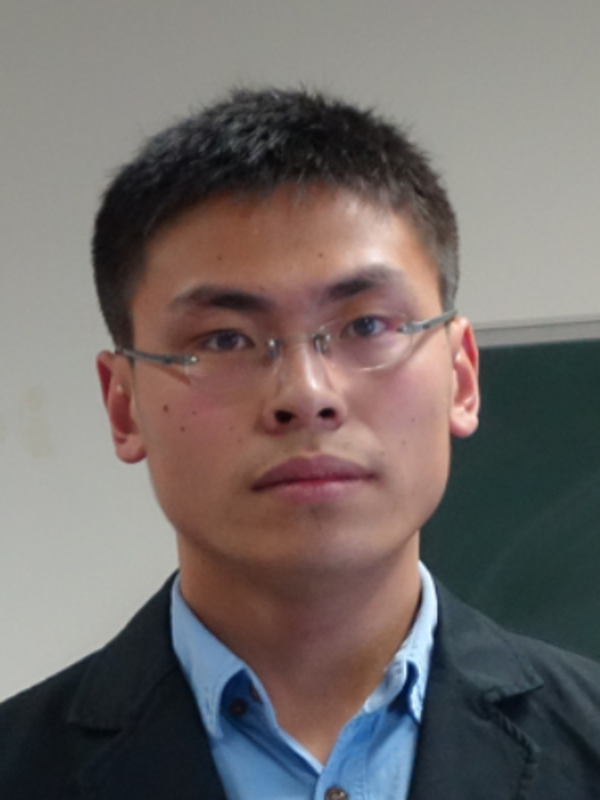}}]{Wei Xiao} is a Nanyang assistant professor at the School of Electrical and Electronic Engineering, Nanyang Technological University, Singapore, a PI at M3S, SMART, and a research affiliate with MIT CSAIL. 
He was an assistant professor at the robotic engineering department at WPI and a postdoctoral associate at Massachusetts Institute of Technology. He received a B.Sc. degree from the University of Science and Technology Beijing, China in 2013, a M.Sc. degree from the Chinese Academy of Sciences (Institute of Automation), China in 2016, and a Ph.D. degree from the Boston University, Brookline, MA, USA in 2021.
His research interests include control theory and machine learning, with particular emphasis on robotics and traffic control. He received an Outstanding Student Paper Award at the 2020 IEEE Conference on Decision and Control.
\end{IEEEbiography}

\vskip 1cm

\begin{IEEEbiography}[{\includegraphics[height=1.25in, keepaspectratio]{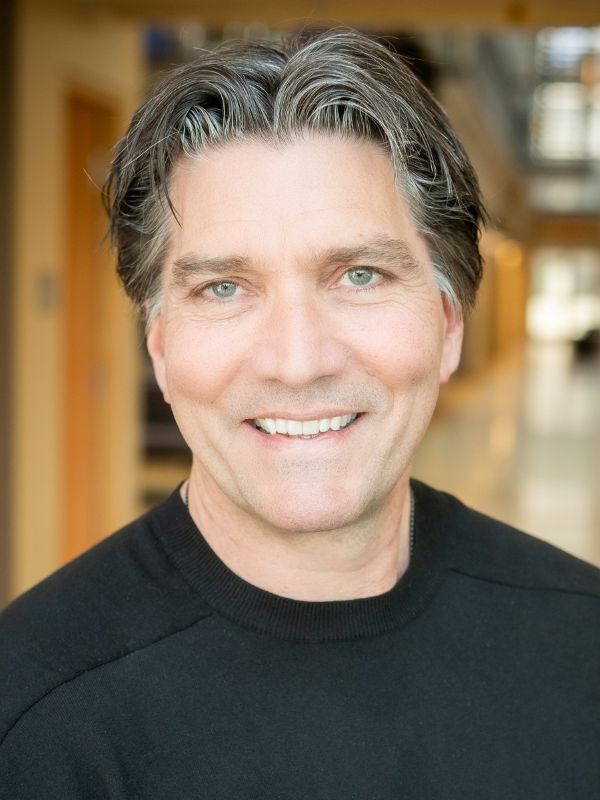}}]{Martin Guay} received a Ph.D. from Queen’s University, Kingston, ON, Canada in 1996. He is currently a Professor in the Department of Chemical Engineering at Queen’s University. His current research interests include nonlinear control systems, especially extremum-seeking control, nonlinear model predictive control, adaptive estimation and control, and geometric control. 
He was a recipient of the Syncrude Innovation Award, the D. G. Fisher from the Canadian Society of Chemical Engineers, and the Premier Research Excellence Award. He is a Senior Editor of IEEE Transactions on Automatic Control. He is the Editor-in-Chief of the Journal of Process Control. He is also an Associate Editor for Automatica and the Canadian Journal of Chemical Engineering.
 \end{IEEEbiography}

\end{document}